\documentclass{article}

\usepackage{arxiv}
\usepackage{color}
\usepackage[utf8]{inputenc}
\usepackage[T1]{fontenc}
\usepackage{hyperref}
\usepackage{url}
\usepackage{booktabs}
\usepackage{amsfonts}
\usepackage{amsmath}
\usepackage{amssymb}
\usepackage{amsthm}
\usepackage{nicefrac}
\usepackage{microtype}
\usepackage{graphicx}
\usepackage[round,authoryear]{natbib}
\usepackage{doi}
\graphicspath{ {./images/} }
\usepackage{bm}
\usepackage{multirow}

\newtheorem{theorem}{Theorem}

\newtheorem{assumption}{Assumption}

\newcommand{\E}{\mathbb{E}}
\newcommand{\Var}{\mathrm{Var}}

\newcommand{\fhat}{\hat{f}_{\mathrm{ML}}}
\newcommand{\Si}{\hat{S}_{X_i}}
\newcommand{\Ti}{{T}^s_{X_i}}
\newcommand{\hatTi}{\hat{T}^s_{X_i}}
\newcommand{\Vmod}{\hat{V}_{\mathrm{model}}}

\usepackage{fancyhdr}
\title{Learning Global Sensitivity Indices from Observational Data: A Metamodel-Based Approach}

\author{
	Giulia Vannucci\thanks{Corresponding author
		\texttt{giulia.vannucci@unina.it}.} \\
	Department of Electrical Engineering and Information Technology\\
	Polytechnic and Basic Sciences School\\
	University of Naples Federico II\\
	Via Claudio, 21, Napoli (NA) 80125, Italy
}

\begin{document}
	
	\maketitle
	
\begin{abstract}
Classical variance-based Global Sensitivity Analysis (GSA) assumes that the input--output mechanism can be repeatedly evaluated under a designed sampling scheme, which is infeasible when only a given sample of observations is available. We propose MM--GSA, a metamodel-based approach to GSA from observational data, in which supervised learning approximates the systematic input--output relationship. MM--GSA combines two complementary perspectives on input relevance: a model-agnostic estimator of the first-order Sobol' index, quantifying the contribution of an input to the variability of the systematic response, and a new trigger-based structural index, quantifying how predictive performance depends on the availability of a predictor across alternative predictor subsets. We establish consistency for both estimators and a variable-selection property for the structural index under input independence. Monte Carlo experiments and an NHANES application illustrate their finite-sample behavior and show that the two measures provide complementary information on input relevance.

\keywords{Global Sensitivity Analysis \and
Machine Learning \and
Metamodels \and
Observational Data \and
Sobol' Indices \and
Trigger Index}

\end{abstract}

	\section{Introduction}
	\label{sec:intro}

Sensitivity Analysis (SA) studies how the outputs of a system are influenced by its inputs and has become a fundamental component of mathematical modeling, uncertainty quantification and decision support \citep{saltelli2021,razavi2021}. Unlike local sensitivity methods, which examine the response to input variations around specific values, Global Sensitivity Analysis (GSA) evaluates the contribution of the inputs over their joint distribution, thereby providing a global characterization of their relevance to the model output \citep{saltelli2008,borgonovo2016common}. Among the many GSA methodologies, variance-based approaches play a central role because they provide an interpretable decomposition of output uncertainty into contributions attributable to individual inputs and their interactions. This decomposition provides a principled basis for ranking input importance, identifying non-influential variables, and guiding model simplification.

Classical variance-based GSA is typically formulated in a setting in which the analyst can repeatedly evaluate the input--output model and control how the input space is explored \citep{saltelli2008,saltelli2010a}. Sensitivity indices are then estimated from model evaluations generated according to structured sampling schemes, in which inputs are deliberately varied to isolate their individual and interaction contributions to output variability. This ability to query the model at chosen input configurations is central to the estimation of classical variance-based sensitivity indices.

In many applications, however, the analyst does not have access to the input--output mechanism itself, but only to a finite dataset of input--output pairs collected in an observational study or obtained from previous model evaluations. The distinction between experimental and observational studies is fundamental in statistical study design \citep{cox2016}. Clinical registries, environmental monitoring, economics, genomics, and machine learning pipelines provide typical examples in which the data have already been collected and the underlying mechanism can no longer be queried at input configurations chosen by the analyst.

From a statistical perspective, such given-data settings naturally lead to supervised learning, with the primary goal of learning the systematic relationship between the inputs and the response. GSA addresses a different question: once this relationship has been learned, how can the contribution of each input to the behavior of the system be quantified? In particular, one may be interested not only in how much an input contributes to the variability of the systematic response, but also in how strongly predictive performance depends on the availability of that input across alternative model specifications. These questions complement statistical learning by using this learned relationship as the basis for a global assessment of input relevance. The methodological challenge is therefore to translate these global sensitivity questions to the observational setting, where the underlying mechanism must first be learned from the available observations and cannot be directly evaluated under a designed sampling scheme.

Several methodological developments are particularly relevant to the present setting. One line of research investigates the relationship between machine learning variable importance and variance-based sensitivity analysis. Earlier work showed that predictive variable importance does not necessarily recover the underlying data-generating mechanism \citep{gottard2020}. More recently, \citet{antoniadis2021} established a theoretical connection between Random Forest permutation importance and the total-order Sobol' index, while trigger-based importance measures have been proposed to distinguish predictive from generative variable importance within the same learning framework \citep{vannucci2026sma}. These contributions establish important connections between sensitivity analysis and variable importance in supervised learning, but their theoretical constructions are intrinsically tied to Random Forests. A second methodological direction extends variance-based sensitivity analysis to dependent inputs through generalized functional ANOVA decompositions \citep{hooker2007,ferrere2026}. These approaches preserve variance decompositions under dependence by introducing orthogonal projections with respect to a reference measure. The resulting sensitivity components, however, are defined through projection-based representations, which can make their interpretation in terms of the original observed predictors less immediate. A particularly relevant connection is the given-data framework of \citet{borgonovo2016common}. Their common rationale shows that a broad class of probabilistic sensitivity measures can be expressed through discrepancies between unconditional and conditional output distributions, and develops consistent estimators from a single sample of input--output observations without requiring a model-in-the-loop. This establishes that global sensitivity questions can be addressed in a given-data setting. Their estimation strategy relies on increasingly refined partitions of the observed input space; the approach developed here follows a different route, using supervised learning metamodels to reconstruct the input--output relationship and evaluate sensitivity through it.

In this paper, we propose a metamodel-based approach to GSA from observational data, which we call MM--GSA. Supervised learning is used to approximate the unknown systematic response function, making it possible to address global sensitivity questions when the underlying input--output mechanism cannot be directly evaluated. The approach combines an established notion of variance-based sensitivity with a new notion of structural sensitivity. Let $X = (X_1, \ldots, X_p)$ denote the vector of input variables, and let $X_i$ denote a generic input of interest. The first component provides $\Si$, a model-agnostic estimator of the first-order sensitivity index, using metamodel predictions to estimate the main-effect function. Examples of metamodel-based GSA can be found in \citet{oakley2004probabilistic} and \citet{marrel2009}, where the metamodel is trained on model evaluations generated under a designed sampling scheme. Here, by contrast, the metamodel is fit directly to a fixed observational sample, and we establish the consistency of the resulting estimator under mild regularity conditions. The second component introduces a new structural sensitivity index, estimated by $\hatTi$, defined through the inclusion indicator associated with $X_i$ across alternative predictor subsets. The superscript $^s$ emphasizes that, unlike the first-order index, this index concerns the presence or absence of $X_i$ in the predictive specification rather than variation in the values of $X_i$ themselves.

The two indices answer distinct but complementary questions: the first-order index quantifies how much of the variability of the systematic response is attributable to $X_i$, whereas the structural index quantifies how strongly predictive performance depends on the availability of $X_i$ across competing model specifications. Their joint use distinguishes input contribution to response variability from the predictive relevance of having that input available. Both indices are defined directly in terms of the observed predictors and can be estimated using different supervised learning algorithms. This preserves their interpretation at the level of the original variables while avoiding projection-based representations or algorithm-specific importance measures.

The remainder of the paper is organized as follows. Section~\ref{sec:s_i_estimator} develops the metamodel-based estimator of the first-order sensitivity index and establishes its theoretical properties. Section~\ref{sec:trigger} introduces the structural sensitivity index $\Ti$, together with its estimator, and studies its statistical properties. Section~\ref{sec:simulations} illustrates the proposed methodology through Monte Carlo studies, and Section~\ref{sec:application} through an illustrative case study on NHANES data. Section~\ref{sec:conclusion} concludes with a discussion of the main findings and future research directions.

\section{Variance-Based Global Sensitivity Analysis}
\label{sec:background}

Variance-based GSA quantifies how uncertainty in the output of a model can be attributed to uncertainty in its input variables. Let $X=(X_1,\ldots,X_p)\in\mathbb R^p$
be a vector of mutually independent input variables, $Y \in \mathbb R$, and let $Y=f(X)$ denote the corresponding model output, where $f:\mathbb R^p\rightarrow\mathbb R$ is the underlying response function. Under this assumption, the functional ANOVA decomposition expresses the output variance as the sum of contributions of increasing dimensionality:
\begin{equation}
V(Y)=\sum_{i=1}^{p}V_i+\sum_{i<j}V_{ij}+\cdots+V_{1,\ldots,p},
\label{eq:decomposition}
\end{equation}
where each term represents the contribution of a subset of inputs to the total output variance \citep{Saltelli1995}.

The corresponding Sobol' sensitivity indices \citep{sobol1995} are obtained by normalizing these variance components:
\begin{equation}
S_i=\frac{V_i}{V(Y)}, \qquad
S_{ij}=\frac{V_{ij}}{V(Y)}, \qquad \ldots
\end{equation}

Among these quantities, the first-order and total-order indices play a central role. The first-order Sobol' index measures the contribution of an input acting alone:
\begin{equation}
S_i=
\frac{\operatorname{Var}_{X_i}
\!\left(
\mathbb E[Y\mid X_i]
\right)}
{\operatorname{Var}(Y)}.
\label{eq:Si_classic}
\end{equation}
The numerator of $S_i$ is the variance of the conditional expectation
$\mathbb E[Y\mid X_i]$, often referred to as the main-effect function. It quantifies how much the expected response changes as $X_i$ varies while the remaining inputs are averaged out. Normalizing by the total output variance yields the fraction of variability that can be attributed to the individual contribution of $X_i$. Equivalently, $S_i$ can be interpreted as the expected reduction in output variance that would be achieved if the value of $X_i$ could be fixed.

The total-order index \citep{homma1996} complements this information by accounting for all interaction effects involving $X_i$:
\begin{equation}
T_i
=
1-
\frac{
\operatorname{Var}_{X_{\sim i}}
\!\left(
\mathbb E[Y\mid X_{\sim i}]
\right)}
{\operatorname{Var}(Y)}
=
\frac{
\mathbb E_{X_{\sim i}}
\!\left(
\operatorname{Var}[Y\mid X_{\sim i}]
\right)}
{\operatorname{Var}(Y)}.
\label{eq:Ti_classic}
\end{equation}
The numerator of the second expression is the expected conditional variance of $Y$ given all inputs except $X_i$. It represents the variability that cannot be explained unless the value of $X_i$ is also specified, and therefore includes both the direct contribution of $X_i$ and all interaction effects involving it. Normalizing by the total output variance yields the overall contribution of $X_i$ to the variability of the model output. Equivalently, $T_i$ can be interpreted as the fraction of output variance that would remain if every factor except $X_i$ could be fixed.

In practice, first-order and total-order indices are estimated through repeated evaluations of the model under structured sampling designs. One of the most widely used estimators is that of \citet{saltelli2010a}:

\begin{equation}
S_i=\frac{V(Y) - \frac{1}{2N} \sum_{v=1}^{N} \left [f (\bm{B})_v - f(\bm{A}_B^{(i)})_v
 \right ]^2}{V(Y)}\,,
 \label{eq:jansen_si}
\end{equation}

\begin{equation}
T_i= \frac{\frac{1}{2N}\sum_{v=1}^{N} \left [ f(\bm{A})_v - f(\bm{A}_B^{(i)})_v \right ]^2}{V(Y)}\,.
 \label{eq:jansen_ti}
\end{equation}
where $f$ is the function or model whose sensitivity is being estimated, $\bm{A}$ and $\bm{B}$ are two independent $N \times p$ sampling matrices with generic elements $a_{vi}$ and $b_{vi}$, where $v=1,\ldots,N$ indexes the rows and $i=1,\ldots,p$ the input variables, and the matrix $\bm{A}_B^{(i)}$ is such that column $i$ comes from matrix $\bm{B}$ and all other $p-1$ columns come from matrix $\bm{A}$.

The measure $S_i$ is computed from the $p$ matrices $\bm{A}_B^{(i)}$ and matrix $\bm{B}$ while measure $T_i$ is computed from the same $p$ matrices $\bm{A}_B^{(i)}$ and matrix $\bm{A}$. In total $N(p+2)$ model evaluations are needed by this design to compute a full set of $S_i$, $T_i$. Further discussion on estimators can be found in~\citet{Owen_2013, azzini2021, kucherenko2011identification}.


\section{Observational Global Sensitivity Analysis}
\label{sec:ob_gsa}

The classical estimators presented in Section~\ref{sec:background} require repeated evaluations of the underlying model under a designed sampling scheme. When only an observational dataset is available, this experimental design is no longer accessible and the classical estimators cannot be applied directly.

Our objective is to estimate variance-based and structural sensitivity measures from observational data. The proposed framework MM--GSA consists of two components. The first develops a metamodel-based estimator of the first-order sensitivity index by approximating the main-effect function through supervised learning metamodels. The second introduces a structural sensitivity index based on trigger functions, which quantifies the contribution of an input through its effect on predictive performance across competing models.

\subsection{First-Order Index via Metamodel Approximation}
	\label{sec:s_i_estimator}

Suppose we wish to estimate the first-order Sobol' index when only observational data are available. The classical definition quantifies the variability of the conditional expectation of the response as one input varies while the remaining inputs are averaged out. In the observational setting, the response function is unknown and cannot be evaluated at arbitrary input configurations. We therefore approximate it using a statistically consistent supervised learning metamodel and estimate the corresponding main-effect function by averaging the metamodel predictions over the empirical distribution of the remaining predictors.

We assume that the observations are generated according to the model
\begin{equation}
Y=f(X)+\varepsilon,
\end{equation}
where $f$ is the unknown response function and $\varepsilon$ is a random error term, independent of $X$, with zero mean.
Let $P_{X,Y}$ denote the joint distribution of $(X,Y)$ and let $\mathcal{D}_{\mathrm{train}}
=\{(\mathbf{x}_j,y_j)\}_{j=1}^{n_{\mathrm{tr}}}$ and $\mathcal{D}_{\mathrm{test}}
=\{(\mathbf{x}_j,y_j)\}_{j=1}^{n_{\mathrm{te}}}$ be two independent samples drawn i.i.d. from $P_{X,Y}$. A supervised learning algorithm trained on $\mathcal{D}_{\mathrm{train}}$ provides a metamodel fit $\hat f:\mathbb R^p\rightarrow\mathbb R$,
which is subsequently used to estimate sensitivity measures from the observations in $\mathcal{D}_{\mathrm{test}}$.
For a given input $X_i$, we denote by $X_{\sim i}=(X_1,\ldots,X_{i-1},X_{i+1},\ldots,X_p)$ 
the vector of all remaining inputs. We write $\mathcal X_i$ and $\mathcal X_{\sim i}$ for the supports of $X_i$ and $X_{\sim i}$, respectively.

The total variance of the metamodel predictions on the test set is estimated as
\begin{equation}
	\Vmod = \frac{1}{n_{te}-1} \sum_{j=1}^{n_{te}} \left(\fhat(\mathbf{x}_j) - \bar{f}\right)^2,
	\label{eq:Vmodel}
\end{equation}
where $\bar{f} = \frac{1}{n_{te}}\sum_{j=1}^{n_{te}} \fhat(\mathbf{x}_j)$. 

The first-order Sobol' index is based on the main-effect function:
\begin{equation}
	f_i(x_i) = \E\left[f(X) \mid X_i = x_i\right]
	= \int_{\mathcal{X}_{\sim i}} 
	f(x_i, \mathbf{x}_{\sim i}) \, 
	dP_{X_{\sim i}|X_i=x_i}(\mathbf{x}_{\sim i})
	\quad x_i \in \mathcal{X}_i.
	\label{eq:fi_true}
\end{equation}
Under input independence, $P_{X_{\sim i}|X_i=x_i} = P_{X_{\sim i}}$, and the conditional expectation reduces to a marginal expectation:
\begin{equation}
	f_i(x_i) = \int_{\mathcal{X}_{\sim i}} 
	f(x_i, \mathbf{x}_{\sim i}) \, 
	dP_{X_{\sim i}}(\mathbf{x}_{\sim i})
	= \E\left[f(x_i, X_{\sim i})\right]
	\label{eq:fi_indep}
\end{equation}
In the following, we assume input independence, under which equations~\eqref{eq:fi_true} and~\eqref{eq:fi_indep} coincide.
The proposed estimator replaces the expectation in equation~\eqref{eq:fi_indep} by an empirical average computed from the metamodel predictions.

To estimate the first-order index, we build a grid of $K \geq 2$ empirical quantiles of $X_i$ on the test set:
\begin{equation}
	x_i^{(1)} \leq x_i^{(2)} \leq \cdots \leq x_i^{(K)}.
	\label{eq:grid}
\end{equation}
For each point $x_i^{(k)}$ on the grid, we fix $X_i = x_i^{(k)}$ and average the metamodel predictions over all test observations for $X_{\sim i}$, where $\mathbf{x}_{j,\sim i} = (x_{j,1},\ldots,x_{j,i-1},x_{j,i+1},\ldots,x_{j,p})$ denotes the $j$-th test observation with the $i$-th component removed. This yields the step function:
\begin{equation}
	\hat{f}_i(x_i^{(k)}) = \frac{1}{n_{te}} 
	\sum_{j=1}^{n_{te}}
	\fhat\!\left(x_i^{(k)},\, \mathbf{x}_{j,\sim i}\right)
	\label{eq:step_Si}
\end{equation}
which approximates the true main effect function $f_i$ defined in equation~\eqref{eq:fi_indep}. 
Finally, the first-order estimator is defined as the variance of the step function normalized by the total variance of the metamodel defined in equation~\eqref{eq:Vmodel}:
\begin{equation}
	\hat{S}_{X_i} = \frac{\displaystyle\frac{1}{K-1}
		\sum_{k=1}^K \left(\hat{f}_i(x_i^{(k)}) - 
		\bar{f}_i\right)^2}{\Vmod}
	\label{eq:Sihat}
\end{equation}
where $\bar{f}_i = \frac{1}{K}\sum_{k=1}^K \hat{f}_i(x_i^{(k)})$ is the mean of the step function on the grid. 

The estimator $\hat{S}_{X_i}$ targets the population quantity
\begin{equation}
S_i^f := \frac{\Var(f_i)}{\Var(f(X))},
\label{eq:Sif}
\end{equation}
the variance of the main-effect function normalized by the variance of the systematic component of the response. By the law of total variance, $\Var(f_i) \leq \Var(f(X))$, so $S_i^f \in [0,1]$. When $\Var(\varepsilon)=0$, $\Var(f(X))=\Var(Y)$ and $S_i^f$ coincides with the classical first-order Sobol' index $S_i=\Var(f_i)/\Var(Y)$; more generally, $S_i^f$ remains well defined and interpretable regardless of the noise level.

\paragraph{Consistency}	
Consistency of $\hat{S}_{X_i}$ with respect to $S_i^f$ follows from decomposing the estimation error into two components:  an approximation error due to replacing $f$ with $\hat{f}_{ML}$, controlled by a consistency assumption on the metamodel, and a Monte Carlo error due to empirical averaging over $X_{\sim i}$, controlled by a uniform law of large numbers for bounded function classes. We work under the following assumptions.

\begin{assumption}[Metamodel consistency]
	\label{ass:H1}
	The ML metamodel $\fhat$ trained on $\mathcal{D}_{train}$ satisfies
	\begin{equation}
		\E\left[(\fhat(X) - f(X))^2\right] \xrightarrow{p} 0 \quad \text{as } n_{tr} \to \infty.
	\end{equation}
\end{assumption}
	
Assumption~\ref{ass:H1} requires that the mean squared error of the metamodel converges to zero as the training sample grows. This condition is satisfied, for example, by Random Forests applied to additive models \citep{scornet2015}, and is assumed here as a general requirement on the metamodel class.
	
\begin{assumption}[Finite fourth moment]
	\label{ass:H2}
	The response function satisfies $\E[f(X)^4] < \infty$.
\end{assumption}
	
Assumption~\ref{ass:H2} is a standard regularity condition ensuring that the sample variance of the metamodel predictions converges to its population counterpart.

\begin{assumption}[Uniform boundedness]
\label{ass:H3}
The true function $f$ and the ML metamodel $\hat{f}_{ML}$ 
satisfy $\|f\|_\infty \leq M$ and 
$\|\hat{f}_{ML}\|_\infty \leq M$ almost surely, 
for some constant $M < \infty$, that is:
\begin{equation}
|f(x_i, \mathbf{x}_{\sim i})| \leq M 
\quad \text{and} \quad
|\hat{f}_{ML}(x_i, \mathbf{x}_{\sim i})| \leq M 
\quad \text{a.s.}
\quad \forall x_i \in \mathcal{X}_i, \; 
\forall \mathbf{x}_{\sim i} \in \mathcal{X}_{\sim i}.
\end{equation}
\end{assumption}

Under Assumption~\ref{ass:H3}, both function classes
\begin{equation}
\mathcal{F}_i = \left\{\mathbf{x}_{\sim i} \mapsto f(x_i, \mathbf{x}_{\sim i}) : x_i \in \mathcal{X}_i \right\} \quad \text{and} \quad \hat{\mathcal{F}}_i = \left\{\mathbf{x}_{\sim i} \mapsto \hat{f}_{ML}(x_i, \mathbf{x}_{\sim i}) : x_i \in \mathcal{X}_i \right\}
\end{equation}
are uniformly bounded by $M$. Their bracketing numbers therefore satisfy $N_{[]}(\varepsilon, \mathcal{F}_i, L_1(P_{X_{\sim i}})) \leq 2M/\varepsilon$ and analogously for $\hat{\mathcal{F}}_i$, for every $\varepsilon > 0$. By Theorem~2.4.1 of \cite{vanderVaart1996}, both $\mathcal{F}_i$ and $\hat{\mathcal{F}}_i$ are Glivenko--Cantelli classes, which yields:
\begin{align}
\sup_{x_i \in \mathcal{X}_i} \left|\frac{1}{n_{te}}\sum_{j=1}^{n_{te}} f(x_i, \mathbf{x}_{j,\sim i}) - \E\left[f(x_i, X_{\sim i})\right]\right| &\xrightarrow{a.s.} 0 \label{eq:uLLN} \\
\sup_{x_i \in \mathcal{X}_i} \left|\frac{1}{n_{te}}\sum_{j=1}^{n_{te}} \hat{f}_{ML}(x_i, \mathbf{x}_{j,\sim i}) - \E\left[\hat{f}_{ML}(x_i, X_{\sim i})\right]\right| &\xrightarrow{a.s.} 0 \label{eq:uLLN_hat}
\end{align}
as $n_{te} \to \infty$.

\begin{assumption}[Deterministic grid]
	\label{ass:H4}
The grid $\{x_i^{(k)}\}_{k=1}^K$ is a deterministic sequence dense in the support $\mathcal{X}_i$ of $X_i$, with $K \to \infty$. 
\end{assumption}
In the numerical implementation the grid consists of the empirical quantiles of $X_i$ on $\mathcal{D}_{test}$. Since empirical quantiles converge to their population counterparts as $n_{te} \to \infty$ \citep[Lemma~21.1]{vanderVaart1998}, a grid of $K$ empirical quantiles becomes dense in $\mathcal{X}_i$ as both $n_{te}$ and $K$ grow, so that Assumption~\ref{ass:H4} holds asymptotically.

\begin{theorem}
	\label{thm:Si}
Under Assumptions~\ref{ass:H1}--\ref{ass:H4}, $\hat{S}_{X_i} \xrightarrow{p} S_i^f$ as $n_{tr}\to\infty$, $n_{te}\to\infty$, and $K\to\infty$.
\end{theorem}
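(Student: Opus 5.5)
The plan is to treat numerator and denominator of $\hat S_{X_i}$ separately, show each converges in probability to its population counterpart, and conclude with the continuous mapping theorem. This requires $\Var(f(X))>0$, which we take as implicit since otherwise $S_i^f$ is undefined. Write $N_K:=\frac{1}{K-1}\sum_{k}(\hat f_i(x_i^{(k)})-\bar f_i)^2$ for the numerator.

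\textbf{Denominator.} We show $\Vmod\xrightarrow{p}\Var(f(X))$ by the triangle inequality through two intermediate quantities. First, conditionally on $\mathcal D_{\mathrm{train}}$, the test points are i.i.d. and $\fhat$ is bounded by $M$ (Assumption~\ref{ass:H3}). Chebyshev's inequality with a bound uniform in $\fhat$ then gives $\Vmod-\Var(\fhat(X))\xrightarrow{p}0$ as $n_{te}\to\infty$, where the variance is taken over $X$ with $\fhat$ fixed. Assumption~\ref{ass:H2} is automatically implied by boundedness here. Second, $|\Var(\fhat(X))-\Var(f(X))|\le \|\fhat-f\|_{L_2}\,\|\fhat+f-c\|_{L_2}$ with suitable centering, so it is bounded by $C M\|\fhat-f\|_{L_2}$. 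This tends to $0$ in probability by Assumption~\ref{ass:H1}.

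\textbf{Numerator.} Decompose, uniformly in $x_i$,
\[
\hat f_i(x_i)-f_i(x_i)=\Bigl[\hat f_i(x_i)-\E_{X_{\sim i}}\fhat(x_i,X_{\sim i})\Bigr]+\E_{X_{\sim i}}\bigl[\fhat(x_i,X_{\sim i})-f(x_i,X_{\sim i})\bigr].
\]
The first bracket is controlled by the uniform LLN \eqref{eq:uLLN_hat} for the class $\hat{\mathcal F}_i$, applied conditionally on the training data. Call its supremum $\Delta_1$.

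The second term is the main obstacle. Assumption~\ref{ass:H1} only gives $L_2(P_X)$ convergence, not sup-norm convergence. I would therefore not attempt a uniform bound on the second term. Instead I would bound its contribution to the variance via an $L_2(P_{X_i})$ bound. Under independence, Jensen's inequality gives $\E_{X_i}\bigl(\E_{X_{\sim i}}[\fhat-f]\bigr)^2\le\E[(\fhat-f)^2]\to0$. So $g:=\E_{X_{\sim i}}\fhat(\cdot,X_{\sim i})$ satisfies $\Var(g(X_i))\to\Var(f_i)$ by the same Cauchy--Schwarz argument as for the denominator.

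It then remains to link the grid variance $\frac{1}{K-1}\sum_k(g(x_i^{(k)})-\bar g)^2$ to $\Var(g(X_i))$. Here Assumption~\ref{ass:H4}, read in its practical form as a quantile grid, is used: the grid's empirical measure converges weakly to $P_{X_i}$ as $K\to\infty$. Since $g$ is bounded by $M$, the grid variance of $g$ converges to $\Var(g(X_i))$ provided $g$ is $P_{X_i}$-a.e. continuous or, more simply, provided the quantile-grid averages behave like Riemann--Stieltjes sums. I expect this step to need an extra mild regularity condition, for example continuity of $f_i$ or equicontinuity of the $\fhat$ family. I would state it explicitly as part of interpreting Assumption~\ref{ass:H4}.

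\textbf{Combining.} Replacing $\hat f_i$ by $g$ on the grid changes the sample variance by at most $4M\Delta_1+\Delta_1^2$, since the map from a vector to its sample variance is Lipschitz in sup norm on bounded sets. Hence $N_K\xrightarrow{p}\Var(f_i)$. Slutsky's lemma together with $\Vmod\xrightarrow{p}\Var(f(X))>0$ then yields $\hat S_{X_i}\xrightarrow{p}S_i^f$.
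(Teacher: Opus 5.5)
Your skeleton matches the paper's. You treat the denominator separately (the paper's term (B)) and split the numerator through $g=\E_{X_{\sim i}}\fhat(\cdot,X_{\sim i})$, which is the paper's $\hat f_i^*$. You depart from the paper exactly where its justification fails. The paper claims that Assumption~\ref{ass:H1} gives $\hat f_i^*(x_i^{(k)})\to f_i(x_i^{(k)})$ at each grid point. It also claims that the grid variance approaches the population variance ``by the Law of Large Numbers''. But an $L_2(P_X)$ statement says nothing about $\fhat$ on the null set of grid points, and a deterministic grid is not a random sample. Your replacements are the right ones:
\begin{itemize}
\item the Jensen bound $\|g-f_i\|_{L_2(P_{X_i})}\le\|\fhat-f\|_{L_2(P_X)}$, which uses input independence;
\item a weak-convergence reading of Assumption~\ref{ass:H4}, since density alone does not fix the limit of the grid's empirical measure;
\item a uniform-in-$\fhat$ treatment of $\Vmod$.
\end{itemize}

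The point to tighten is the extra regularity, and it cannot be dropped, because the statement under Assumptions~\ref{ass:H1}--\ref{ass:H4} alone is false. Take the following setting:
\begin{itemize}
\item $p=2$ and $X_1,X_2$ i.i.d.\ $U[0,1]$;
\item $f(\mathbf x)=x_2$, so $f_1\equiv 1/2$ is continuous and $S_1^f=0$;
\item the grid $x_1^{(k)}=k/(K+1)$;
\item $\fhat(\mathbf x)=x_2+\mathbf 1_A(x_1)$, where $A$ is the null set of rationals in $(0,1)$ with odd reduced numerator.
\end{itemize}
All four assumptions hold, with $\E[(\fhat(X)-f(X))^2]=0$ and $\Vmod\to 1/12$. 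Yet whenever $K+1$ is prime, $\hat f_1(x_1^{(k)})$ alternates between two values differing by $1$, so $\hat S_{X_1}\to 3$ along that subsequence. So continuity of $f_i$ is not enough.

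Now replace $A$ by the finite set of odd-indexed points of the current grid. Each $g$ is then a.e.\ continuous, yet the same failure occurs. So a.e.\ continuity of each $g$ separately is not enough either. Because $g$ moves with $n_{tr}$ while $K\to\infty$ jointly, you need the grid variance to approach $\Var(g(X_i))$ uniformly over the sequence of $g$'s. Your second option delivers this: equicontinuity of $x_i\mapsto\fhat(x_i,\mathbf x_{\sim i})$, uniformly in $\mathbf x_{\sim i}$ and $n_{tr}$, on compact $\mathcal X_i$. Combine it with $\frac1K\sum_k\delta_{x_i^{(k)}}\Rightarrow P_{X_i}$ and Arzel\`a--Ascoli. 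Keep that condition and drop the first option.

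That condition excludes the piecewise-constant RF/XGB fits used in the paper. To cover them, take the grid to be i.i.d.\ draws of $X_i$ independent of $\mathcal D_{\mathrm{train}}$, for example the test values themselves. Then $\E[\frac1K\sum_k(g-f_i)^2(x_i^{(k)})\mid\mathcal D_{\mathrm{train}}]=\|g-f_i\|^2_{L_2(P_{X_i})}$. The strong LLN for the fixed bounded function $f_i$ then finishes the argument with no continuity at all.

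Finally, you do not need \eqref{eq:uLLN_hat}. Uniformly bounded classes need not be Glivenko--Cantelli, so its derivation from boundedness alone is not valid. Even granted, a.s.\ convergence for each fixed $\fhat$ says nothing about a joint limit in which $\fhat$ moves with $n_{tr}$. Instead, take the grid deterministic, or built from the test $X_i$-coordinates, which are independent of the $\mathbf x_{j,\sim i}$ by input independence. Conditionally on $\mathcal D_{\mathrm{train}}$, each grid point then satisfies $\E[(\hat f_i(x)-g(x))^2\mid\mathcal D_{\mathrm{train}}]\le M^2/n_{te}$. The sample variance is $8M$-Lipschitz in root-mean-square distance on $[-M,M]^K$. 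Together these make this error term $O_p(M^2/\sqrt{n_{te}})$, uniformly in $K$ and $\fhat$.
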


\begin{proof}
From equation~\eqref{eq:Sihat}, we can rewrite the numerator as $\widehat{\Var}(\hat{f}_i)$, the sample variance of the step function $\hat{f}_i(x_i^{(1)}), \ldots, \hat{f}_i(x_i^{(K)})$ over the quantile grid, that is:
\begin{equation*}
    \widehat{\Var}(\hat{f}_i) = 
    \frac{1}{K-1}\sum_{k=1}^K 
    \left(\hat{f}_i(x_i^{(k)}) - \bar{f}_i\right)^2.
\end{equation*}
We can therefore write:
\begin{equation}
    \hat{S}_{X_i} - S_i^f = 
    \frac{\widehat{\Var}(\hat{f}_i)}{\Vmod} - 
    \frac{\Var(f_i)}{\Var(f(X))}
\end{equation}

By adding and subtracting $\Var(f_i) / \Vmod$ we obtain:
	\begin{align}
		\hat{S}_{X_i} - S_i^f 
		&= \frac{\widehat{\Var}(\hat{f}_i) - \Var(f_i)}{\Vmod} 
		+ \Var(f_i) \left(\frac{1}{\Vmod} - \frac{1}{\Var(f(X))}\right) 
		\notag \\
		&= \frac{\widehat{\Var}(\hat{f}_i) - \Var(f_i)}{\Vmod} 
		+ \Var(f_i) \cdot 
		\frac{\Var(f(X)) - \Vmod}{\Vmod \cdot \Var(f(X))}
		\label{eq:decomp}
	\end{align}
Taking absolute values and applying the triangle inequality:
	\begin{equation}
		|\hat{S}_{X_i} - S_i^f| \leq 
		\underbrace{
			\frac{|\widehat{\Var}(\hat{f}_i) - \Var(f_i)|}{\Vmod}
		}_{\mathrm{(A)}}
		+ 
		\underbrace{
			S_i^f \cdot \frac{|\Vmod - \Var(f(X))|}{\Vmod}
		}_{\mathrm{(B)}}
		\label{eq:bound}
	\end{equation}
	
Now we show that both terms converge to zero in probability.
	
Starting with the term $\mathrm{(B)}$ we rewrite:
	\begin{equation}\notag
		S_i^f \cdot \frac{|\Vmod - \Var(f(X))|}{\Vmod} 
		= S_i^f \cdot \left|1 - \frac{\Var(f(X))}{\Vmod}\right|
	\end{equation}
Since $S_i^f \in [0,1]$ is fixed, it suffices to show that $\left|1 - \frac{\Var(f(X))}{\Vmod}\right| \xrightarrow{p} 0$. Since $\Vmod$ is a function of the metamodel predictions $\hat{f}(\mathbf{x}_j)$ alone, its convergence does not involve $Y$ or $\varepsilon$: under Assumptions~\ref{ass:H1} and~\ref{ass:H2}, $\Vmod \xrightarrow{p} \Var(f(X))$ by the Law of Large Numbers applied to the metamodel predictions. Since $\Var(f(X)) > 0$, $\frac{\Var(f(X))}{\Vmod} \xrightarrow{p} \frac{\Var(f(X))}{\Var(f(X))} = 1$.
Therefore, $\left|1 - \frac{\Var(f(X))}{\Vmod}\right| \xrightarrow{p} 0$, and term $\mathrm{(B)} \xrightarrow{p} 0$.

The term $\mathrm{(A)}$ goes to zero when the numerator goes to zero, since the denominator $\Vmod$ is positive by construction and converges in probability to 
$\Var(f(X)) > 0$. We add and subtract $\Var(\hat{f}_i)$, the true variance of the step function of the metamodel, and apply the triangle inequality:
\begin{equation}
	|\widehat{\Var}(\hat{f}_i) - \Var(f_i)| \leq \underbrace{|\widehat{\Var}(\hat{f}_i) - \Var(\hat{f}_i)|}_{\mathrm{(A1)}} + \underbrace{|\Var(\hat{f}_i) - \Var(f_i)|}_{\mathrm{(A2)}}
\end{equation}
Term $\mathrm{(A1)}$ goes to zero by the Law of Large Numbers under Assumption~\ref{ass:H2}.
Term $\mathrm{(A2)}$ goes to zero as follows. 
Under Assumption~\ref{ass:H3}, the uniform convergence 
in equation~\eqref{eq:uLLN_hat} yields 
$\hat{f}_i(x_i^{(k)}) \xrightarrow{a.s.} \hat{f}_i^*(x_i^{(k)})$ 
uniformly over the grid, where 
$\hat{f}_i^*(x_i^{(k)}) = \mathbb{E}[\hat{f}_{ML}(x_i^{(k)}, X_{\sim i})]$.
Under Assumption~\ref{ass:H1}, 
$\hat{f}_i^*(x_i^{(k)}) \xrightarrow{p} f_i(x_i^{(k)})$ 
for each $k$. Together these imply 
$\Var(\hat{f}_i) \xrightarrow{p} \Var(f_i)$.

Since both terms $\mathrm{(A)}$ and $\mathrm{(B)}$ 
converge to zero in probability, we conclude 
$\hat{S}_{X_i} \xrightarrow{p} S_i^f$.
\end{proof}

\paragraph{Relationship with existing methods}

The estimated main-effect function $\hat{f}_i$ defined in equation~\eqref{eq:step_Si} is formally identical to the partial dependence function introduced by \citet{friedman2001}: both are obtained by averaging metamodel predictions over the empirical distribution of $X_{\sim i}$ while fixing $X_i$. In the proposed framework, however, the variance of this function over the quantile grid is used to estimate $\mathrm{Var}(f_i(X_i))$, thereby transforming a descriptive visualization into a quantitative sensitivity measure. In this sense, a partial dependence plot describes \emph{how} $X_i$ influences the response, whereas the proposed estimator quantifies \emph{how much} it contributes to the output variance.

Moreover, the proposed estimator is related to the general framework of given-data sensitivity estimation developed by \citet{borgonovo2016common}, who establish consistency of sensitivity measures of the form $h(F_Y,F_{Y|X_i})$ through partition-refining strategies. Our construction follows a different route: instead of conditioning on partitions that become increasingly finer as the sample size grows, it estimates the main-effect function directly by averaging metamodel predictions over the empirical distribution of the remaining inputs. The two approaches therefore establish consistency through fundamentally different statistical constructions: partition refinement in \citet{borgonovo2016common}, and metamodel approximation in the present work.

\subsection{Trigger-Based Structural Index}
\label{sec:trigger}

Suppose we want to assess the importance of $X_i$ through a question different from the one addressed by $\Si$: how much does predictive performance change when $X_i$ is available or unavailable to the learning algorithm? We propose $\Ti$, a structural sensitivity index that addresses this question through random perturbations of predictor availability. $\Ti$ extends the GSA-based variable selection framework of \citet{becker2021varsel} to the setting of ML metamodels, treating the fitted metamodel as a black box whose predictive loss is evaluated across random input subsets.

Let $\bm{\gamma} = (\gamma_1, \ldots, \gamma_p) \in \{0,1\}^p$ be a binary vector indicating which input variables are included in the model, with $\gamma_j = 1$ meaning $X_j$ is included and $\gamma_j = 0$ meaning it is excluded. For a given $\bm{\gamma}$, let $q(\bm{\gamma})$ denote the predictive loss of a supervised learning algorithm trained on the variables selected by $\bm{\gamma}$, evaluated on observations not used for training. In practice, $q(\bm{\gamma})$ is computed on a held-out 
validation set:
\begin{equation}
q(\bm{\gamma}) = \sqrt{\frac{1}{n_{te}} \sum_{j=1}^{n_{te}} \left(y_j - \hat{y}_j(\bm{\gamma})\right)^2},
\label{eq:q_gamma}
\end{equation}
where $\hat{y}_j(\bm{\gamma})$ denotes the prediction of $y_j$ from the metamodel trained on the variables selected by $\bm{\gamma}$, evaluated on held-out observations. Let $q^*(\bm{\gamma})$ denote the population counterpart of $q(\bm{\gamma})$, that is, the predictive loss achieved by the best predictor of $Y$ based on the variables selected by $\bm{\gamma}$. For a given index $i$, let $\bm{\gamma}^{(i)}$ denote the vector obtained from $\bm{\gamma}$ by flipping the $i$-th coordinate, that is $\gamma^{(i)}_j = \gamma_j$ for $j \neq i$ and $\gamma^{(i)}_i = 1 - \gamma_i$. The trigger index is then estimated as:
\begin{equation}
\hatTi = \frac{\displaystyle\frac{1}{4L}\sum_{l=1}^{L}\left(q(\bm{\gamma}_l^{(i)}) - q(\bm{\gamma}_l)\right)^2}{\displaystyle\frac{1}{L-1}\sum_{l=1}^{L}\left(q(\bm{\gamma}_l) - \bar{q}\right)^2},
\label{eq:Ti}
\end{equation}
where $\bm{\gamma}_1, \ldots, \bm{\gamma}_L$ are $L$ independent draws, each obtained by first sampling the subset size $|\bm{\gamma}|$ uniformly from $\{2, \ldots, p-1\}$ and then selecting a subset of that size uniformly at random, and $\bar{q} = \frac{1}{L}\sum_{l=1}^L q(\bm{\gamma}_l)$. The factor $\frac{1}{4}$ in the numerator follows the normalization of \citet[Lemma~3]{becker2021varsel}, derived for $\gamma_i$ uniformly distributed on $\{0,1\}$. The population trigger index is defined analogously in terms of $q^*$:
\begin{equation}
\Ti = \frac{\mathbb{E}_{\bm{\gamma}}\left[(q^*(\bm{\gamma}^{(i)}) - q^*(\bm{\gamma}))^2\right]}{4\,\mathrm{Var}_{\bm{\gamma}}(q^*(\bm{\gamma}))}.
\label{eq:Ti_population}
\end{equation}
where $\mathbb{E}_{\bm{\gamma}}$ and $\mathrm{Var}_{\bm{\gamma}}$ are taken with respect to the sampling distribution of $\bm{\gamma}$ described above.

The plain-English interpretation of $\hatTi$ is: \emph{what fraction of the variability in predictive 
loss across input subsets is attributable to whether 
$X_i$ is included or excluded?} Let $d_l = q(\bm{\gamma}_l^{(i)}) - q(\bm{\gamma}_l)$ denote the marginal contribution of $X_i$ at the $l$-th draw. By the bias-variance decomposition of 
the second moment, $\frac{1}{L}\sum_{l=1}^{L} d_l^2 = \bar{d}^2 + 
\frac{L-1}{L}\widehat{\mathrm{Var}}(d)$, where $\bar{d} = \frac{1}{L}\sum_{l=1}^L d_l$ and $\widehat{\mathrm{Var}}(d) = \frac{1}{L-1}\sum_{l=1}^L(d_l - \bar{d})^2$. Therefore, $\hatTi$ captures both the average marginal contribution 
of $X_i$ to predictive loss and its contextual variability 
across different input subsets.

\paragraph{Consistency}	
Consistency of $\hatTi$ with respect to its population counterpart $\Ti$ requires both $L$ and $n_{tr}$ to grow: as $L \to \infty$, $\hatTi$ converges to a quantity that still depends on the fitted metamodel, which in turn converges to $\Ti$ as $n_{tr} \to \infty$. A second result establishes a variable-selection property under input
independence: $\Ti = 0$ whenever $X_i$ is irrelevant for $f$. We work under the following assumptions. 

\begin{assumption}[Finite second moment of $q$]
\label{ass:T1}
The predictive loss function $q(\bm{\gamma})$ satisfies $\mathbb{E}_{\bm{\gamma}}[q(\bm{\gamma})^2] < \infty$, where the expectation is taken with respect to the sampling distribution of $\bm{\gamma}$.
\end{assumption}

Assumption~\ref{ass:T1} ensures that the Law of Large Numbers applies to both the numerator and denominator of $\hatTi$.

\begin{assumption}[Positive denominators]
\label{ass:T2}
$\mathrm{Var}_{\bm{\gamma}}(q(\bm{\gamma})) > 0$ and $\mathrm{Var}_{\bm{\gamma}}(q^*(\bm{\gamma})) > 0$.
\end{assumption}

Assumption~\ref{ass:T2} excludes the degenerate case in which the predictive loss is constant across all subsets $\bm{\gamma}$; the second condition ensures that the population index $\Ti$ in equation~\eqref{eq:Ti_population} is well defined.

\begin{assumption}[Pointwise consistency of predictive loss]
\label{ass:T3}
For every $\bm{\gamma} \in \{0,1\}^p$ with $|\bm{\gamma}| \in \{2,\ldots,p-1\}$:
\begin{equation}
q(\bm{\gamma}) \xrightarrow{p} q^*(\bm{\gamma}) \quad \text{as } n_{tr} \to \infty.
\end{equation}
\end{assumption}

Assumption~\ref{ass:T3} requires that the predictive loss of the fitted metamodel converges to its population-optimal counterpart for every input subset $\bm{\gamma}$; since there are finitely many such subsets, this convergence holds jointly across all $\bm{\gamma}$, and in particular implies $q(\bm{\gamma}^{(i)}) - q(\bm{\gamma}) \xrightarrow{p} q^*(\bm{\gamma}^{(i)}) - q^*(\bm{\gamma})$ for every $\bm{\gamma}$.

\begin{theorem}
\label{thm:Ti_consistency}
Under Assumptions~\ref{ass:T1}--\ref{ass:T3}, $\hatTi \xrightarrow{p} \Ti$ as $n_{tr} \to \infty$ and $L \to \infty$.
\end{theorem}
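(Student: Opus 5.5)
The plan is a two-stage limit that mirrors the decomposition used for Theorem~\ref{thm:Si}. Define the intermediate, metamodel-dependent population quantity
\begin{equation*}
\tilde T^s_{X_i} = \frac{\mathbb{E}_{\bm{\gamma}}\left[(q(\bm{\gamma}^{(i)}) - q(\bm{\gamma}))^2\right]}{4\,\mathrm{Var}_{\bm{\gamma}}(q(\bm{\gamma}))},
\end{equation*}
where the expectation and variance are over the sampling distribution of $\bm{\gamma}$ only. The training and validation samples are held fixed, so $q(\cdot)$ is a fixed function on the finite set of admissible subsets. The triangle inequality gives $|\hatTi - \Ti| \le |\hatTi - \tilde T^s_{X_i}| + |\tilde T^s_{X_i} - \Ti|$. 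I will show the first term vanishes as $L\to\infty$ and the second as $n_{tr}\to\infty$.

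\textbf{Step 1 ($L\to\infty$).} Conditionally on the data, $\bm{\gamma}_1,\ldots,\bm{\gamma}_L$ are i.i.d. The numerator of \eqref{eq:Ti} is an empirical mean of $d_l^2$, where $d_l = q(\bm{\gamma}_l^{(i)})-q(\bm{\gamma}_l)$. Since $d_l^2 \le 2q(\bm{\gamma}_l^{(i)})^2+2q(\bm{\gamma}_l)^2$, it has finite mean by Assumption~\ref{ass:T1}, so the strong law of large numbers gives convergence to $\mathbb{E}_{\bm{\gamma}}[(q(\bm{\gamma}^{(i)})-q(\bm{\gamma}))^2]$. One point needs care: $\bm{\gamma}^{(i)}$ may have size $1$ or $p$, outside the sampling support. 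However, $q$ is finite on every subset, so finiteness is automatic because the support is finite. The denominator is the sample variance of $q(\bm{\gamma}_l)$ and converges to $\mathrm{Var}_{\bm{\gamma}}(q(\bm{\gamma}))$, again by the strong law of large numbers together with Assumption~\ref{ass:T1}. That limit is positive by Assumption~\ref{ass:T2}, so the continuous mapping theorem applied to the ratio yields $\hatTi\to\tilde T^s_{X_i}$ almost surely, conditionally on the data. Bounded convergence then gives the same convergence in probability unconditionally.

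\textbf{Step 2 ($n_{tr}\to\infty$).} Because $\bm{\gamma}$ ranges over finitely many values with known probabilities $\pi(\bm{\gamma})$, both $\mathbb{E}_{\bm{\gamma}}[(q(\bm{\gamma}^{(i)})-q(\bm{\gamma}))^2]$ and $\mathrm{Var}_{\bm{\gamma}}(q(\bm{\gamma}))$ are finite polynomial expressions in the finite vector $(q(\bm{\gamma}))_{\bm{\gamma}}$. Assumption~\ref{ass:T3}, applied jointly over this finite set, and the continuous mapping theorem give convergence of both expressions to their $q^*$ counterparts. Since $\mathrm{Var}_{\bm{\gamma}}(q^*(\bm{\gamma}))>0$, the ratio map is continuous at the limit, and therefore $\tilde T^s_{X_i}\xrightarrow{p}\Ti$.

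\textbf{Main obstacle.} The step I expect to be delicate is the flipped vectors $\bm{\gamma}^{(i)}$. When $|\bm{\gamma}|=2$ with $\gamma_i=1$, or $|\bm{\gamma}|=p-1$ with $\gamma_i=0$, the flip leaves the set covered by Assumption~\ref{ass:T3}. My first approach is to read Assumption~\ref{ass:T3} as also covering these boundary subsets, which is implicit in the remark following it. Failing that, I would note that convergence of $q(\bm{\gamma}^{(i)})-q(\bm{\gamma})$ is exactly what is used there, and assume it directly. Finally, combining the two limits simultaneously (rather than iteratively) is straightforward. The bound in Step 1 can be made uniform over data sets, because only finitely many subset frequencies enter. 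Specifically, the empirical subset frequencies converge to $\pi(\bm{\gamma})$ regardless of the data, and $\hatTi$ is a continuous function of these frequencies and of $(q(\bm{\gamma}))_{\bm{\gamma}}$.
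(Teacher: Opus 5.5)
Your proposal is correct and follows the paper's proof essentially step for step: for fixed data, the LLN and the continuous mapping theorem send $\hatTi$ to the metamodel-dependent index $T_{X_i}^{(s,n_{tr})}$ as $L\to\infty$, and then Assumption~\ref{ass:T3} over the finite family of subsets, followed by continuous mapping for the ratio, gives $T_{X_i}^{(s,n_{tr})}\xrightarrow{p}\Ti$. Your extra remarks are sound refinements of points the paper passes over. First, flipped vectors $\bm{\gamma}^{(i)}$ of size $1$ or $p$ are not covered by Assumption~\ref{ass:T3} as stated. Second, writing $\hatTi$ as a function of the empirical subset frequencies and of $(q(\bm{\gamma}))_{\bm{\gamma}}$ upgrades the paper's iterated limit to a joint one. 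That function is continuous at the limit because $\mathrm{Var}_{\bm{\gamma}}(q^*(\bm{\gamma}))>0$, so the uniformity you invoke is needed, and available, only in a neighbourhood of $(q^*(\bm{\gamma}))_{\bm{\gamma}}$ rather than over all data sets.
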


\begin{proof}
Fix $n_{tr}$ and let $L \to \infty$. The numerator of $\hatTi$ is $\frac{1}{4L}\sum_{l=1}^L d_l^2$, where $d_l = q(\bm{\gamma}_l^{(i)}) - q(\bm{\gamma}_l)$ are i.i.d.\ since the $\bm{\gamma}_l$ are i.i.d.\ by construction. Under Assumption~\ref{ass:T1}, $\mathbb{E}[d_l^2] \leq 
2\mathbb{E}[q(\bm{\gamma}_l^{(i)})^2] + 
2\mathbb{E}[q(\bm{\gamma}_l)^2] < \infty$, 
where finiteness follows from Assumption~\ref{ass:T1} 
applied to both $\bm{\gamma}$ and $\bm{\gamma}^{(i)}$. The denominator $\frac{1}{L-1}\sum_{l=1}^L(q(\bm{\gamma}_l) - \bar{q})^2$ is the sample variance of $q(\bm{\gamma}_l)$, which converges in probability to $\mathrm{Var}_{\bm{\gamma}}(q(\bm{\gamma})) > 0$ under Assumptions~\ref{ass:T1} and~\ref{ass:T2} by the Law of Large Numbers. By the continuous mapping theorem applied to the ratio,
\begin{equation}
\hatTi \xrightarrow{p} T_{X_i}^{(s, n_{tr})} := \frac{\mathbb{E}_{\bm{\gamma}}\left[(q(\bm{\gamma}^{(i)}) - q(\bm{\gamma}))^2\right]}{4\,\mathrm{Var}_{\bm{\gamma}}(q(\bm{\gamma}))} \qquad \text{as } L \to \infty.
\end{equation}
This quantity $T_{X_i}^{(s, n_{tr})}$ is the trigger index computed as if the training sample of size $n_{tr}$ were fixed and infinitely many subsets were sampled; it depends on the fitted metamodel and hence, in general, on $n_{tr}$. Now we show that $T_{X_i}^{(s, n_{tr})} \xrightarrow{p} \Ti$ as $n_{tr} \to \infty$. Under Assumption~\ref{ass:T3}, $q(\bm{\gamma}) \xrightarrow{p} q^*(\bm{\gamma})$ for every $\bm{\gamma}$, jointly over the finite set of subsets $\{\bm{\gamma} : |\bm{\gamma}| \in \{2,\ldots,p-1\}\}$. By the continuous mapping theorem, the numerator and denominator of $T_{X_i}^{(s, n_{tr})}$ converge in probability to the numerator and denominator of $\Ti$ in equation~\eqref{eq:Ti_population}; since the latter denominator is positive by Assumption~\ref{ass:T2}, a further application of the continuous mapping theorem to the ratio gives $T_{X_i}^{(s, n_{tr})} \xrightarrow{p} \Ti$.
\end{proof}

\begin{theorem}
\label{thm:Ti_varsel}
Assume that the input variables are mutually independent. If $X_i$ is irrelevant for $f$, that is, $f(\mathbf{x}) = f(\mathbf{x}_{\sim i})$ for all $\mathbf{x} \in \mathcal{X}$, then $\Ti = 0$.
\end{theorem}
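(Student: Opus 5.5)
Since $\Ti$ in equation~\eqref{eq:Ti_population} is a ratio whose denominator is positive by Assumption~\ref{ass:T2}, it suffices to show that the numerator vanishes. We will show the stronger statement that $q^*(\bm{\gamma}^{(i)}) = q^*(\bm{\gamma})$ for every admissible $\bm{\gamma}$; the numerator is then the expectation of an identically zero quantity. The plan has three steps. First, identify $q^*(\bm{\gamma})$ explicitly. Second, show it does not depend on $\gamma_i$. Third, deal with subsets $\bm{\gamma}^{(i)}$ that fall outside the sampling range.

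\emph{Step 1: the population loss.} Write $S(\bm{\gamma}) = \{j : \gamma_j = 1\}$ and $X_{\bm{\gamma}} = (X_j)_{j \in S(\bm{\gamma})}$. Since $q$ is a root mean squared error, the best predictor of $Y$ based on $X_{\bm{\gamma}}$ is $\E[Y \mid X_{\bm{\gamma}}]$. Because $\varepsilon$ is independent of $X$ with mean zero, this equals $\E[f(X) \mid X_{\bm{\gamma}}]$. Hence
\[
q^*(\bm{\gamma})^2 = \Var(\varepsilon) + \E\bigl[(f(X) - \E[f(X)\mid X_{\bm{\gamma}}])^2\bigr].
\]

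\emph{Step 2: removing $\gamma_i$.} Without loss of generality let $\gamma_i = 1$, so that $X_{\bm{\gamma}} = (X_i, X_{\bm{\gamma}^{(i)}})$. Irrelevance gives $f(X) = g(X_{\sim i})$ for some function $g$. By mutual independence, $X_i$ is independent of the pair $(X_{\sim i}, X_{\bm{\gamma}^{(i)}})$, because $X_{\bm{\gamma}^{(i)}}$ is a subvector of $X_{\sim i}$. Therefore
\[
\E[g(X_{\sim i}) \mid X_i, X_{\bm{\gamma}^{(i)}}] = \E[g(X_{\sim i}) \mid X_{\bm{\gamma}^{(i)}}].
\]
The two conditional expectations coincide, so their mean squared errors coincide and $q^*(\bm{\gamma}) = q^*(\bm{\gamma}^{(i)})$. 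This is the main point of the proof, and it is where input independence is essential. Under dependence, $X_i$ could carry information about other relevant inputs that are excluded from $\bm{\gamma}$, and the equality would fail.

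\emph{Step 3: the boundary.} Flipping coordinate $i$ can produce a subset of size $1$ or $p$, which lies outside the range on which $\bm{\gamma}$ is sampled. The only delicate point is to confirm that $q^*$ is still defined on such subsets. The characterization in Step 1 applies to every subset, including size $1$, size $p$, and even the empty set, where the conditional expectation is the constant $\E f(X)$. So the argument of Step 2 holds for all $\bm{\gamma}$. The numerator of $\Ti$ is therefore $\E_{\bm{\gamma}}[0] = 0$, and $\Ti = 0$.
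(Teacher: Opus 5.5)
Your proof is correct and follows the same route as the paper: both show $q^*(\bm{\gamma}^{(i)}) = q^*(\bm{\gamma})$ for every $\bm{\gamma}$, so the numerator of $\Ti$ vanishes. The paper only asserts that including $X_i$ ``provides no additional information,'' whereas you prove it via the identity $\E[g(X_{\sim i}) \mid X_i, X_{\bm{\gamma}^{(i)}}] = \E[g(X_{\sim i}) \mid X_{\bm{\gamma}^{(i)}}]$, and you additionally check that $q^*$ is defined on the flipped subsets of size $1$ or $p$, a point the paper leaves implicit.
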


\begin{proof}
If $X_i$ is irrelevant for $f$ and the input variables are mutually independent, the inclusion of $X_i$ provides no additional information about the response for any predictor subset $\bm{\gamma}$. Therefore, the population-optimal predictive loss is unchanged when the inclusion indicator of $X_i$ is flipped:
\[
q^*(\bm{\gamma}^{(i)}) = q^*(\bm{\gamma})
\qquad \text{for every } \bm{\gamma}.
\]
Hence the numerator of $\Ti$ in equation~\eqref{eq:Ti_population} is zero,
and therefore $\Ti = 0$.
\end{proof}

\paragraph{Relationship with existing methods}

The proposed trigger index builds upon the GSA-based variable selection framework of \citet{becker2021varsel}, who introduced trigger sensitivity measures for identifying influential variables through random perturbations of model specifications. The present work extends this idea from deterministic model evaluations to observational settings, replacing repeated evaluations of the underlying model with predictive losses computed from supervised learning metamodels. As a result, the trigger framework becomes directly applicable when only observational data are available.

An interesting connection is that the trigger index bears a structural resemblance to the Shapley value \citep{shapley1953value,lundberg2017unified} computed on the predictive loss function, defined as:
\begin{equation}
\phi_i(q) = \sum_{S \subseteq \{1,\ldots,p\} \setminus \{i\}} \frac{|S|!(p-|S|-1)!}{p!}\left(q(S \cup \{i\}) - q(S)\right).
\label{eq:shapley}
\end{equation}
Both measures aggregate the marginal contribution of $X_i$ across input subsets, and drawing the subset size uniformly before drawing a subset of that size weights subsets in a way similar in spirit to the Shapley weights. The two measures differ in how marginal contributions are aggregated: the Shapley value averages linear differences, whereas $\Ti$ averages squared differences normalized by the variance of the predictive loss, thereby reflecting both the magnitude of the average marginal contribution of $X_i$ and its variation across predictor subsets. An empirical comparison between $\hatTi$ and Shapley-based importance is presented in Section~\ref{sec:mc_ti}, where, following common practice, Shapley importance is computed with TreeSHAP on the predictions of the fitted metamodel rather than on the predictive loss.

\section{Monte Carlo Studies}
\label{sec:simulations}

The Monte Carlo experiments are designed to complement the theoretical results established in the previous sections by investigating the finite-sample behavior of the proposed framework. For the first-order index $\Si$, we investigate whether the estimator (i) recovers the classical first-order Sobol' index when the assumptions of variance-based GSA hold, (ii) exhibits the convergence behavior predicted by the theoretical results, and (iii) remains informative when the input variables are no longer independent. For the trigger-based structural index $\hatTi$, we investigate whether it
(i) identifies structurally relevant inputs, (ii) distinguishes structural relevance from predictive association induced
by dependence, and (iii) characterizes its behavior under increasingly complex dependency structures.

Unless otherwise stated, all experiments are based on independent training and test samples of size $n_{tr}=n_{te}=1000$, and results are averaged over $R=500$ Monte Carlo replications. Three supervised learning algorithms are considered throughout: Random Forests (RF), Extreme Gradient Boosting (XGB), and feed-forward Neural Networks (NN). RF are fitted using the \texttt{ranger} package with 500 trees, \texttt{mtry}$=\lfloor\sqrt{p}\rfloor$, minimum node size 5, and bootstrap sampling with replacement. XGB models are fitted using the \texttt{xgboost} package with \texttt{nrounds}=100, \texttt{max\_depth}=6 and \texttt{eta}=0.1. NN are fitted using the \texttt{nnet} package with a single hidden layer of 10 units and \texttt{maxit}=500. The response is rescaled by its maximum absolute value prior to neural network training and back-transformed for prediction.

\subsection{First-Order Observational Index}
\label{sec:mc_si}

\subsubsection{Recovery of Classical Sobol' Indices}
\label{sec:bench}

Our first objective is to assess whether the proposed estimator recovers the classical first-order Sobol' indices when the assumptions underlying variance-based GSA are satisfied. We therefore consider benchmark functions for which the analytical values of the Sobol' indices are known exactly.

The Ishigami function \citep{ishigami1990importance} and the Sobol' G-function \citep{Saltelli1995} are used as the main benchmarks because they represent two complementary settings: the former combines strong nonlinearity with interaction effects, whereas the latter is a product-form function in which input importance decreases rapidly across inputs.
Table~\ref{tab:benchmark_si} reports the results. Additional benchmark results for the Bratley and Oakley--O'Hagan functions are reported in Appendix~\ref{app:bench}.

\begin{table}[htbp]
\centering
\caption{Recovery of the classical first-order Sobol' indices. Mean (standard deviation) of $\hat{S}_{X_i}$ over $R=500$ Monte Carlo replications. $S_i^*$ denotes the analytical first-order Sobol' index.}
\label{tab:benchmark_si}

\begin{tabular}{llcccc}
\toprule
 & Variable & NN & RF & XGB & $S_i^*$ \\
\midrule

\multirow{3}{*}{Ishigami}
& $X_1$ & 0.324 (0.022) & 0.432 (0.020) & 0.335 (0.016) & 0.314 \\
& $X_2$ & 0.464 (0.034) & 0.380 (0.022) & 0.426 (0.023) & 0.442 \\
& $X_3$ & 0.001 (0.002) & 0.010 (0.006) & 0.003 (0.001) & 0.000 \\
\midrule

\multirow{8}{*}{Sobol' G}
& $X_1$ & 0.747 (0.042) & 0.822 (0.028) & 0.782 (0.025) & 0.717 \\
& $X_2$ & 0.185 (0.013) & 0.152 (0.016) & 0.164 (0.010) & 0.179 \\
& $X_3$ & 0.023 (0.003) & 0.016 (0.005) & 0.009 (0.001) & 0.024 \\
& $X_4$ & 0.007 (0.002) & 0.006 (0.003) & 0.002 (0.000) & 0.007 \\
& $X_5$ & 0.000 (0.001) & 0.001 (0.001) & 0.000 (0.000) & 0.000 \\
& $X_6$ & 0.000 (0.001) & 0.001 (0.001) & 0.000 (0.000) & 0.000 \\
& $X_7$ & 0.000 (0.001) & 0.001 (0.001) & 0.000 (0.000) & 0.000 \\
& $X_8$ & 0.000 (0.002) & 0.001 (0.001) & 0.000 (0.000) & 0.000 \\
\bottomrule
\end{tabular}

\end{table}

Across both benchmark functions, $\hat{S}_{X_i}$ closely reproduces the analytical first-order Sobol' indices and correctly identifies the ranking of input importance across the three metamodel classes. Some differences in finite-sample accuracy remain across learners, with RF showing a moderate upward bias for the most influential inputs in some settings. Importantly, all three metamodels assign values close to zero whenever the analytical first-order index vanishes.

\subsubsection{Empirical Convergence}
\label{sec:cons}

Our second objective is to examine the empirical convergence behavior described by Theorem~\ref{thm:Si}. In particular, we investigate the two quantities appearing explicitly in the theoretical construction: the accuracy of the metamodel approximation, controlled by the training sample size $n_{tr}$, and the approximation of the main-effect function, controlled by the number of grid points $K$.

The study is conducted on the Ishigami and Sobol' G benchmark functions. Performance is measured by the mean absolute error (MAE) between the estimated and analytical first-order Sobol' indices over $R=100$ Monte Carlo replications.

We first fix the test sample size at $n_{te} = 1000$ and the grid size at $K = 100$, while varying the training sample size over $n_{tr} \in \{200, 500, 1000, 2000\}$. Then, we fix $n_{tr}=n_{te}=1000$ and investigate the effect of the grid resolution by varying $K\in\{50,100,150\}$.
The corresponding results are shown respectively in Figures~\ref{fig:cons_a} and~\ref{fig:cons_c}. For graphical clarity, both figures report the results for $X_1$ and $X_2$ only. In both benchmark functions these variables correspond to the largest non-zero first-order effects and therefore provide the most informative setting for assessing convergence. The remaining inputs have either substantially smaller or null analytical indices and exhibit the same qualitative behavior.

\begin{figure}[htbp]
\centering
\includegraphics[width=0.75\textwidth]{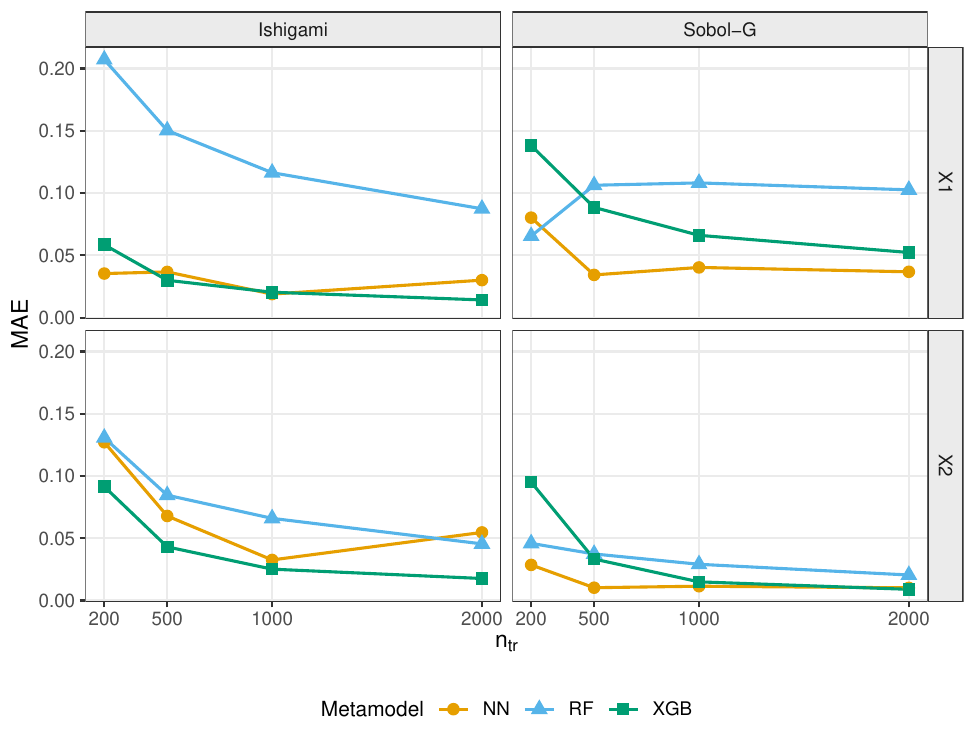}
\caption{
Empirical convergence of the observational first-order estimator as the training sample size increases. Mean absolute error (MAE) with respect to the analytical Sobol' indices is reported for the Ishigami and Sobol' G benchmark functions for variables $X_1$ and $X_2$ over $R=100$ Monte Carlo replications. The test sample size is fixed at $n_{te} = 1000$ and the quantile grid at $K = 100$.
}
\label{fig:cons_a}
\end{figure}

\begin{figure}[htbp]
\centering
\includegraphics[width=0.80\textwidth]{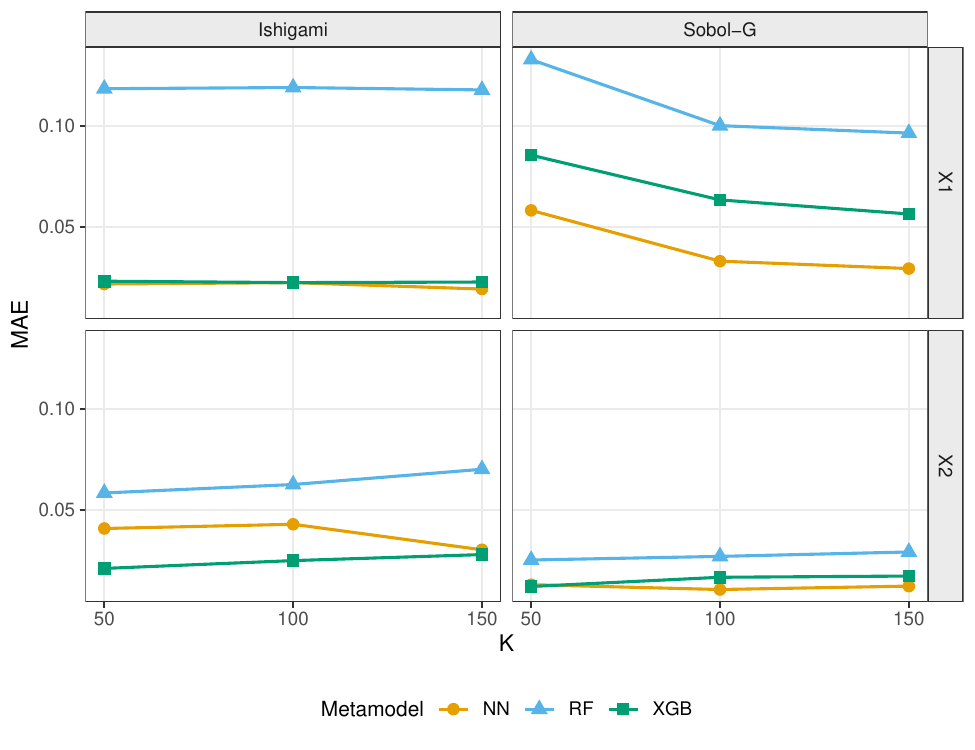}
\caption{
Effect of the quantile grid size on the observational first-order estimator. Mean absolute error (MAE) with respect to the analytical Sobol' indices is reported for the Ishigami and Sobol' G benchmark functions for variables $X_1$ and $X_2$ over $R=100$ Monte Carlo replications. Both training and test sample sizes are fixed at $n_{tr}=n_{te}=1000$.
}
\label{fig:cons_c}
\end{figure}

Increasing the training sample size reduces the estimation error for all three metamodels, in agreement with the role of metamodel approximation in Theorem~\ref{thm:Si}. The improvement is particularly pronounced for XGB and NN, whereas RF exhibits a moderate residual bias, in agreement with the benchmark results of the previous section.

The effect of the grid size is considerably smaller. Moving from $K=50$ to $K=100$ produces a noticeable reduction in MAE, whereas little additional improvement is observed for $K=150$. This behavior suggests that a grid of approximately one hundred quantile points provides an adequate approximation of the main-effect function for the sample sizes considered here.

\subsubsection{Dependency Structures}
\label{sec:dag}

We now investigate observational settings in which the input variables are dependent and the classical variance decomposition is no longer directly applicable. 
We note that Theorem~\ref{thm:Si} establishes consistency of $\hat{S}_{X_i}$ toward $S_i^f$ under input independence. In the scenarios considered below, the inputs are dependent by construction. Consequently, $\hat{S}_{X_i}$ no longer represents a component of the classical Sobol' variance decomposition; it instead estimates the variance of the marginal average-effect function, normalized by the variance of the systematic response.

Two representative directed acyclic graph (DAG) structures are considered. In both cases, some variables are associated with the response only indirectly through intermediate predictors, whereas others enter the response directly. These scenarios allow us to examine whether, under the specific dependency structures considered here, the estimator separates predictors entering the response directly from predictors associated with it through intermediate variables.

Throughout this study we set $a=0.7$ and $b=3$, with all error terms independently distributed as $\epsilon_j\sim\mathcal N(0,1)$.

\textsc{Scenario 1}
\begin{equation}
\begin{split}
X_1, X_2 &= \epsilon_1,\epsilon_2,\\
X_3 &= aX_1+\epsilon_3,\\
X_4 &= aX_2+\epsilon_4,\\
Y &= bX_3+bX_4+\epsilon_Y.
\end{split}\label{scenario1}
\end{equation}

The response depends directly on $X_3$ and $X_4$. Variables $X_1$ and $X_2$ are associated with the response only indirectly through the intermediate variables $X_3$ and $X_4$, respectively. Consequently, the proposed estimator should assign high values to $X_3$ and $X_4$, while assigning negligible values to $X_1$ and $X_2$.

\textsc{Scenario 2}
\begin{equation}
\begin{split}
X_1,\ldots,X_6 &= \epsilon_1,\ldots,\epsilon_6,\\
X_7 &= a(X_1+X_2+X_3+X_4+X_5+X_6)+\epsilon_7,\\
Y &= bX_7+\epsilon_Y.
\end{split}\label{scenario2}
\end{equation}

The response depends directly only on $X_7$. Variables $X_1,\ldots,X_6$ are associated with the response exclusively through their contribution to the common intermediate variable $X_7$. This scenario therefore evaluates whether the proposed estimator can distinguish a direct association with the response from an indirect association induced by the dependency structure.

Table~\ref{tab:dag} reports the estimated observational first-order indices.

\begin{table}[htbp]
\centering
\caption{First-order indices $\hat{S}_{X_i}$ under representative dependency structures. Entries report the mean (standard deviation) over $R=500$ Monte Carlo replications.}
\label{tab:dag}

\begin{tabular}{llccc}
\toprule
 & Variable & NN & RF & XGB\\
\midrule

\multirow{4}{*}{\textsc{Scenario 1}}
& $X_1$ & 0.003 (0.018) & 0.002 (0.001) & 0.000 (0.000)\\
& $X_2$ & 0.003 (0.028) & 0.002 (0.001) & 0.000 (0.000)\\
& $X_3$ & 0.559 (0.055) & 0.467 (0.024) & 0.514 (0.028)\\
& $X_4$ & 0.564 (0.194) & 0.467 (0.024) & 0.514 (0.027)\\
\midrule

\multirow{7}{*}{\textsc{Scenario 2}}
& $X_1$ & 0.000 (0.001) & 0.005 (0.001) & 0.000 (0.000)\\
& $X_2$ & 0.001 (0.008) & 0.005 (0.001) & 0.000 (0.000)\\
& $X_3$ & 0.001 (0.004) & 0.005 (0.001) & 0.000 (0.000)\\
& $X_4$ & 0.001 (0.009) & 0.005 (0.001) & 0.000 (0.000)\\
& $X_5$ & 0.001 (0.007) & 0.005 (0.001) & 0.000 (0.000)\\
& $X_6$ & 0.001 (0.012) & 0.005 (0.001) & 0.000 (0.000)\\
& $X_7$ & 1.090 (0.249) & 0.651 (0.017) & 1.047 (0.024)\\
\bottomrule
\end{tabular}

\end{table}

The results clearly distinguish variables directly associated with the response from those whose association is entirely mediated by the dependency structure. In Scenario 1, all three metamodels assign negligible observational first-order indices to $X_1$ and $X_2$, while correctly identifying $X_3$ and $X_4$ as the only variables directly contributing to the variability of the response. A similar pattern is observed in Scenario 2, where the six background variables receive values close to zero despite being strongly associated with $Y$ through the common mediator $X_7$, whereas $X_7$ receives values close to one. Values slightly above one may occur in these dependent-input settings because the resulting quantity is no longer a component of a classical variance decomposition and is therefore not constrained to the unit interval.

These experiments illustrate the behavior of the proposed estimator when the independence assumption underlying the classical Sobol' decomposition is relaxed. In the dependency structures considered here, the observational first-order index provides a clear separation between variables that contribute directly to response variability and variables whose association with the response is mediated by other predictors. These results should not be interpreted as recovering a classical Sobol' decomposition, since under input dependence $\hat{S}_{X_i}$ targets the marginal average-effect quantity described above rather than $S_i$.

As an additional robustness assessment, Appendix~\ref{app:becker} reports the results of the Becker stress-test protocol, in which the proposed estimator is evaluated over a large ensemble of randomly generated metafunctions spanning a wide range of nonlinear and interaction structures.

\subsection{Trigger-Based Structural Index}
\label{sec:mc_ti}
The second part of the Monte Carlo study investigates the empirical behavior of the structural index $\hatTi$. Here, the objective is to characterize the type of variable relevance captured by the trigger construction. Because the trigger requires repeated refitting of the metamodel across randomly sampled predictor subsets, this part of the study is restricted to RF and XGB, with lighter configurations to contain the computational cost: RF with 100 trees and XGB with \texttt{nrounds}=50, \texttt{max\_depth}=4 and \texttt{eta}=0.1. Each replication uses a sample of size $n=1000$ and $L=100$ random inclusion vectors, and results are averaged over $R=500$ Monte Carlo replications. The predictive loss $q(\bm{\gamma})$ is computed as an out-of-sample RMSE: for RF we use out-of-bag predictions on the full sample, whereas for XGB the model is trained on one half of the sample and the loss is evaluated on the other half. For comparison, we also report the global TreeSHAP importance \citep{lundberg2020}, defined as the mean absolute SHAP value across observations, computed on RF and XGB fits with the same configurations.

\subsubsection{Structural Relevance in Benchmark Functions}
\label{sec:ti_bench}

We first consider the Ishigami and Sobol' G-functions in order to assess whether the $\hatTi$ index identifies inputs that are structurally relevant to the response and assigns negligible values to variables that do not enter the data-generating mechanism.

The Ishigami function provides a useful stress case because $X_3$ contributes only through an interaction term and has zero first-order Sobol' index. The Sobol' G-function, in contrast, contains several inputs with exactly zero contribution, allowing direct assessment of the variable-selection property established in Theorem~\ref{thm:Ti_varsel}.

Table~\ref{tab:ti_benchmark} reports the mean and standard deviation of $\hatTi$ and TreeSHAP importance over the Monte Carlo replications.

\begin{table}[htbp]
\centering
\caption{Trigger-based structural sensitivity and TreeSHAP importance for the Ishigami and Sobol' G benchmark functions. Entries report mean (standard deviation) over $R=500$ Monte Carlo replications.}
\label{tab:ti_benchmark}

\begin{tabular}{llcccc}
\toprule
 & Variable & $\hatTi$ RF & $\hatTi$ XGB & SHAP RF & SHAP XGB \\
\midrule
\multirow{3}{*}{Ishigami}
& $X_1$ & 0.898 (0.131) & 1.125 (0.220) & 1.755 (0.078) & 1.777 (0.087) \\
& $X_2$ & 0.858 (0.116) & 0.970 (0.198) & 1.637 (0.053) & 1.859 (0.053) \\
& $X_3$ & 0.205 (0.038) & 0.203 (0.050) & 0.526 (0.029) & 0.542 (0.044) \\
\midrule
\multirow{8}{*}{Sobol' G}
& $X_1$ & 0.922 (0.055) & 0.878 (0.049) & 0.366 (0.013) & 0.477 (0.014) \\
& $X_2$ & 0.092 (0.015) & 0.131 (0.024) & 0.159 (0.010) & 0.212 (0.009) \\
& $X_3$ & 0.005 (0.001) & 0.002 (0.001) & 0.048 (0.008) & 0.045 (0.006) \\
& $X_4$ & 0.005 (0.001) & 0.000 (0.000) & 0.029 (0.006) & 0.014 (0.004) \\
& $X_5$ & 0.005 (0.001) & 0.000 (0.000) & 0.017 (0.003) & 0.003 (0.002) \\
& $X_6$ & 0.005 (0.001) & 0.000 (0.000) & 0.017 (0.003) & 0.003 (0.002) \\
& $X_7$ & 0.005 (0.001) & 0.000 (0.000) & 0.017 (0.003) & 0.003 (0.002) \\
& $X_8$ & 0.005 (0.001) & 0.000 (0.000) & 0.017 (0.003) & 0.003 (0.002) \\
\bottomrule
\end{tabular}
\end{table}

The results show that $\hatTi$ sharply separates structurally relevant from irrelevant inputs. In the Sobol' G-function, variables $X_5,\ldots,X_8$, which do not enter the data-generating mechanism, are assigned values essentially equal to zero, particularly under XGB. This behavior is in line with the variable-selection result established in Theorem~\ref{thm:Ti_varsel}. The Ishigami function illustrates a characteristic feature of the trigger index. Variable $X_3$ contributes to the response exclusively through interaction effects and therefore has a smaller structural contribution than $X_1$ and $X_2$. Nevertheless, both RF and XGB assign a non-zero trigger value to $X_3$, reflecting its active role in the data-generating mechanism. The trigger index therefore distinguishes between variables that are structurally involved in the response and variables that are completely irrelevant, irrespective of the specific form through which they contribute. Overall, both the trigger index and TreeSHAP correctly identify the relevant variables in these benchmark settings.

\subsubsection{Structural Relevance in the Presence of Correlated Predictors}
\label{sec:ti_correlato}

We next investigate whether $\hatTi$ can distinguish a variable that enters the response directly from one that is predictive only because it is correlated with a relevant input. We consider the following scenario: \\
\textsc{Scenario 3}
\begin{equation}
\begin{split}
(X_1,X_2) &\sim \mathcal N_2(\mathbf 0,\Sigma_\rho),\\
X_3 &= 0.7X_1+\epsilon_3,\\
X_4 &= \epsilon_4,\\
Y &= 3X_1+3X_2+\epsilon_Y,
\end{split}
\end{equation}
where $X_1$ and $X_2$ enter the response directly, $X_3$ is a proxy associated with $Y$ only through its correlation with $X_1$, and $X_4$ is independent noise. The experiment is repeated for $\rho\in\{0,0.3,0.6,0.9\}$.

\begin{table}[htbp]
\centering
\caption{Scenario 3: Trigger index $\hatTi$ and TreeSHAP importance under increasing correlation between the two directly associated predictors. Entries report mean (standard deviation) over $R=500$ Monte Carlo replications.}
\label{tab:scenario3}

\begin{tabular}{llcccc}
\toprule
$\rho$ & Variable & $\hatTi$ RF & $\hatTi$ XGB & SHAP RF & SHAP XGB\\
\midrule

\multirow{4}{*}{0.0}
& $X_1$ & 0.667 (0.078) & 0.585 (0.077) & 2.161 (0.078) & 2.333 (0.091) \\
& $X_2$ & 0.906 (0.088) & 0.811 (0.082) & 2.313 (0.073) & 2.345 (0.095) \\
& $X_3$ & 0.044 (0.011) & 0.020 (0.007) & 0.310 (0.047) & 0.059 (0.024) \\
& $X_4$ & 0.007 (0.002) & 0.000 (0.000) & 0.069 (0.006) & 0.046 (0.014) \\
\midrule

\multirow{4}{*}{0.3}
& $X_1$ & 0.743 (0.102) & 0.635 (0.091) & 2.347 (0.099) & 2.512 (0.174) \\
& $X_2$ & 0.957 (0.117) & 0.840 (0.101) & 2.452 (0.092) & 2.489 (0.161) \\
& $X_3$ & 0.043 (0.011) & 0.018 (0.007) & 0.319 (0.046) & 0.056 (0.018) \\
& $X_4$ & 0.006 (0.002) & 0.000 (0.000) & 0.067 (0.005) & 0.048 (0.014) \\
\midrule

\multirow{4}{*}{0.6}
& $X_1$ & 0.916 (0.178) & 0.788 (0.150) & 2.414 (0.123) & 2.555 (0.257) \\
& $X_2$ & 1.069 (0.209) & 0.932 (0.178) & 2.503 (0.112) & 2.571 (0.260) \\
& $X_3$ & 0.039 (0.010) & 0.015 (0.006) & 0.323 (0.051) & 0.060 (0.017) \\
& $X_4$ & 0.005 (0.002) & 0.000 (0.000) & 0.065 (0.006) & 0.052 (0.016) \\
\midrule

\multirow{4}{*}{0.9}
& $X_1$ & 1.416 (0.542) & 1.283 (0.484) & 2.395 (0.192) & 2.502 (0.440) \\
& $X_2$ & 1.445 (0.497) & 1.308 (0.451) & 2.395 (0.190) & 2.439 (0.433) \\
& $X_3$ & 0.044 (0.011) & 0.017 (0.006) & 0.300 (0.056) & 0.067 (0.020) \\
& $X_4$ & 0.005 (0.002) & 0.000 (0.000) & 0.065 (0.005) & 0.058 (0.016) \\
\bottomrule
\end{tabular}

\vspace{1mm}
\footnotesize
\textit{Note.} $X_1$ and $X_2$ enter the data-generating mechanism directly; $X_3$ is associated with the response only through its correlation with $X_1$; $X_4$ is independent noise.
\end{table}

Table~\ref{tab:scenario3} summarizes the results. Across all correlation levels, $\hatTi$ preserves a clear separation between the directly relevant inputs and the proxy variable. The mean trigger value assigned to $X_3$ remains small and stable as $\rho$ increases, while $X_1$ and $X_2$ retain substantially larger values. The irrelevant variable $X_4$ is driven essentially to zero. TreeSHAP instead assigns a positive importance to the proxy variable, which is attributable to its predictive interpretation.

These results show that the trigger index is sensitive to structural relevance rather than association alone: a predictor may be statistically informative about the response without receiving a large $\hatTi$ if its predictive content is redundant with variables already present in the model.

\subsubsection{Dependency Structures}
\label{sec:ti_dag}

Finally, we examine the behavior of $\hatTi$ when a variable that does not enter the response carries information about several directly relevant predictors. We consider two scenarios that share the same dependency structure and differ only in the functional form of the response.

\textsc{Scenario 4}
\begin{equation}
\begin{split}
X_1 &= \epsilon_1,\\
X_j &= aX_1+\epsilon_j,\qquad j=2,3,4,\\
Y &= bX_2+bX_3+bX_4+\epsilon_Y.
\end{split}
\label{scenario4}
\end{equation}

In this scenario, the response depends directly on $X_2$, $X_3$, and $X_4$, whereas $X_1$ is associated with the response only through these three variables. Because $X_1$ is linearly related to all three directly associated predictors, it can partially reconstruct the predictive signal whenever one or more of them are omitted from a sampled predictor subset.

\textsc{Scenario 4$'$}
\begin{equation}
\begin{split}
X_1 &= \epsilon_1,\\
X_j &= aX_1+\epsilon_j,\qquad j=2,3,4,\\
Y &= bX_2^2+bX_3+bX_4+\epsilon_Y.
\end{split}
\label{scenario4b}
\end{equation}

Scenario~4$'$ preserves the same dependency structure as Scenario~4, but introduces a nonlinear, quadratic effect of $X_2$ on $Y$, while $X_3$ and $X_4$ retain a linear effect and the generating equations for the mediators are unchanged.

Table~\ref{tab:ti_dag} reports the corresponding values of $\hatTi$ together with TreeSHAP importance.

\begin{table}[htbp]
\centering
\caption{Trigger-based structural sensitivity $\hatTi$ and TreeSHAP importance under Scenarios~4 and~4$'$. Entries report mean (standard deviation) over $R=500$ Monte Carlo replications.}
\label{tab:ti_dag}

\begin{tabular}{llcccc}
\toprule
 & Variable & $\hatTi$ RF & $\hatTi$ XGB & SHAP RF & SHAP XGB\\
\midrule

\multirow{4}{*}{\textsc{Scenario 4}}
& $X_1$ & 0.265 (0.074) & 0.140 (0.050) & 1.601 (0.279) & 1.069 (0.400) \\
& $X_2$ & 1.025 (0.184) & 0.870 (0.166) & 2.662 (0.191) & 2.788 (0.267) \\
& $X_3$ & 1.046 (0.192) & 0.887 (0.170) & 2.652 (0.201) & 2.791 (0.260) \\
& $X_4$ & 1.061 (0.196) & 0.898 (0.173) & 2.675 (0.199) & 2.792 (0.253) \\
\midrule

\multirow{4}{*}{\textsc{Scenario 4$'$}}
& $X_1$ & 0.050 (0.016) & 0.017 (0.010) & 0.545 (0.175) & 0.177 (0.113) \\
& $X_2$ & 1.470 (0.141) & 1.262 (0.105) & 3.755 (0.230) & 4.099 (0.285) \\
& $X_3$ & 0.201 (0.050) & 0.149 (0.050) & 2.657 (0.179) & 2.598 (0.172) \\
& $X_4$ & 0.203 (0.052) & 0.148 (0.049) & 2.659 (0.160) & 2.607 (0.160) \\
\bottomrule
\end{tabular}
\vspace{1mm}
\footnotesize
\noindent
\begin{minipage}{\textwidth}
\textit{Note.} In both scenarios $X_2$, $X_3$ and $X_4$ are directly associated with the response, whereas $X_1$ is associated with it only through these three variables. In Scenario~4, $X_1$ carries linear information about all three directly associated variables. Scenario~4$'$ replaces the linear effect of $X_2$ on $Y$ with a quadratic one, weakening this redundancy.
\end{minipage}
\end{table}

Under the linear specification of Scenario~4, both $\hatTi$ and TreeSHAP assign a non-negligible importance to $X_1$. Although $X_1$ is only indirectly associated with the response, it acts as an effective predictive proxy for the omitted mediators and therefore retains predictive value across competing models.

The quadratic specification of Scenario~4$'$ weakens this redundancy while preserving the same dependency structure. As a consequence, the trigger value assigned to $X_1$ collapses towards zero, whereas the directly associated variables remain clearly distinguishable. TreeSHAP also reduces the importance assigned to $X_1$, although the separation between the proxy variable and the directly associated predictors is less pronounced.

Taken together, these experiments clarify the interpretation of the proposed structural index. $\hatTi$ quantifies how strongly predictive performance depends on the availability of a predictor across alternative predictor subsets. Its value therefore reflects both the predictor's own contribution and the extent to which its information can substitute for, or be substituted by, information carried by other predictors.

\section{Illustrative Case Study}
\label{sec:application}

We illustrate MM--GSA using data from the National Health and Nutrition Examination Survey (NHANES), a nationally representative health survey conducted by the U.S. National Center for Health Statistics. We use the \texttt{NHANES} R package \citep{nhanes2015}, which contains 10{,}000 observations resampled from the 2009--2010 and 2011--2012 survey cycles to account for the oversampling of specific population groups. This case study is performed to show how $\Si$ and $\hatTi$ behave on real, correlated covariates. No epidemiological conclusions about systolic blood pressure are drawn from this case study.

The outcome is average systolic blood pressure (\texttt{BPSysAve}). To keep the illustration interpretable, we retain eight continuous predictors covering demographic, anthropometric, physiological, metabolic, and socioeconomic characteristics: age, body mass index (BMI), height, pulse rate, total cholesterol, direct cholesterol, average sleep duration, and the poverty-to-income ratio. Starting from a deduplicated sample of $n=6779$ individuals, observations with missing values in any of the retained variables were excluded, yielding a final analytic sample of $n=4178$ individuals.

For each of $R = 100$ repeated 70/30 train--test splits, we fit RF, XGB, and NN using the same specifications as in Section~\ref{sec:simulations}, except that for the neural network predictors are standardized using training-set means and standard deviations and a weight decay of $0.01$ is applied, to stabilize training on covariates with heterogeneous scales. We compute $\Si$, $\hatTi$ (with $L = 100$ random inclusion vectors, restricted to RF and XGB as in Section~\ref{sec:mc_ti}), and TreeSHAP importance on each split. Table~\ref{tab:nhanes_results} reports the mean and standard deviation of each sensitivity measure across replications.

\begin{table}[htbp]
\centering
\caption{NHANES application: mean (standard deviation) of $\Si$, $\hatTi$, and TreeSHAP importance across $R=100$ repeated 70/30 train--test splits.}
\label{tab:nhanes_results}

\begin{tabular}{lcccc}
\toprule
Variable & $\Si$ RF & $\Si$ XGB & $\Si$ NN & $\hatTi$ RF \\
\midrule
Age            & 0.637 (0.031)  & 0.403 (0.029)  & 0.566 (0.036)  & 0.893 (0.051)  \\
BMI            & 0.044 (0.008)  & 0.054 (0.013)  & 0.060 (0.055)  & 0.022 (0.002)  \\
Height         & 0.035 (0.007)  & 0.037 (0.009)  & 0.037 (0.009)  & 0.027 (0.003)  \\
TotChol        & 0.018 (0.005)  & 0.022 (0.006)  & 0.021 (0.018)  & 0.013 (0.002)  \\
DirectChol     & 0.017 (0.005)  & 0.033 (0.016)  & 0.087 (0.068)  & 0.011 (0.002)  \\
SleepHrsNight  & 0.024 (0.013)  & 0.027 (0.015)  & 0.066 (0.066)  & 0.017 (0.004)  \\
Poverty        & 0.010 (0.003)  & 0.024 (0.006)  & 0.014 (0.005)  & 0.015 (0.002)  \\
Pulse          & 0.015 (0.007)  & 0.026 (0.024)  & 0.013 (0.013)  & 0.014 (0.003)  \\
\bottomrule
\end{tabular}

\vspace{2mm}

\begin{tabular}{lccc}
\toprule
Variable & $\hatTi$ XGB & SHAP RF & SHAP XGB \\
\midrule
Age            & 0.949 (0.050)  & 5.95 (0.06)  & 6.23 (0.16) \\
BMI            & 0.011 (0.006)  & 1.76 (0.04)  & 1.57 (0.13) \\
Height         & 0.012 (0.006)  & 1.72 (0.03)  & 1.66 (0.15) \\
TotChol        & 0.005 (0.003)  & 1.30 (0.04)  & 0.76 (0.13) \\
DirectChol     & 0.002 (0.001)  & 0.81 (0.02)  & 0.48 (0.08) \\
SleepHrsNight  & 0.001 (0.001)  & 0.37 (0.01)  & 0.28 (0.07) \\
Poverty        & 0.004 (0.002)  & 1.05 (0.03)  & 0.93 (0.15) \\
Pulse          & 0.002 (0.001)  & 0.82 (0.03)  & 0.41 (0.10) \\
\bottomrule
\end{tabular}
\end{table}

Age ranks first under $\Si$, $\hatTi$, and TreeSHAP for both RF and XGB. For the remaining predictors, however, the rankings differ across measures.

This distinction is particularly visible under RF. Total cholesterol ranks fourth according to TreeSHAP but seventh according to $\hatTi$, whereas sleep duration shows the opposite pattern, ranking last according to TreeSHAP and fourth according to $\hatTi$. Thus, a predictor may contribute substantially to model predictions while its availability across alternative predictor subsets has comparatively little impact on predictive performance, and vice versa. Under XGB the two rankings are more similar. This illustrates the different information conveyed by the two measures: TreeSHAP describes feature attribution within the fitted model, whereas $\hatTi$ measures the sensitivity of predictive loss to the inclusion or exclusion of the predictor.

More generally, the rankings produced by the different measures are related but not interchangeable. Across the repeated splits, the mean Spearman correlation between $\Si$ and $\hatTi$ is $0.699$ under RF and $0.549$ under XGB. The application therefore illustrates the motivation for considering the two MM--GSA quantities jointly: $\Si$ quantifies the contribution of $X_i$ to the variability of the fitted systematic response, whereas $\hatTi$ quantifies how strongly predictive performance depends on the availability of $X_i$ across alternative predictor subsets.

\section{Conclusion}
\label{sec:conclusion}

This paper has proposed MM--GSA, a metamodel-based approach to Global Sensitivity Analysis from given data. Supervised learning metamodels provide access to global sensitivity questions when the underlying input--output mechanism cannot be repeatedly evaluated under a designed sampling scheme. MM--GSA combines two distinct components. The first provides a model-agnostic estimator $\Si$ of the first-order index $S_i^f$, which coincides with the classical first-order Sobol' index under input independence and a noiseless response, and quantifies the contribution of $X_i$ to the variability of the systematic response. The second introduces the new structural sensitivity index $\Ti$, together with its estimator $\hatTi$, which quantifies how strongly predictive performance depends on the availability of $X_i$ across alternative predictor subsets. Consistency results are established for both estimators, together with a variable-selection property for $\Ti$ under the conditions of Theorem~\ref{thm:Ti_varsel}.

Monte Carlo experiments across benchmark functions and dependent input structures illustrate the finite-sample behavior of the two measures and their different interpretations. Comparisons with TreeSHAP further show that predictive attribution and sensitivity to predictor availability need not convey the same information. The NHANES case study shows that these differences also arise with real correlated covariates. In particular, predictors receiving non-negligible attribution within a fitted model may have a comparatively limited impact on predictive loss when their availability is varied across alternative predictor subsets. The application therefore provides an empirical illustration of why $\Si$ and $\hatTi$ can be informative when considered jointly: $\Si$ characterizes the contribution of an input to systematic response variability, while $\hatTi$ characterizes the dependence of predictive performance on the availability of that input.

MM--GSA is closely related to the given-data perspective of \citet{borgonovo2016common}, discussed in relation to $\Si$ in Section~\ref{sec:s_i_estimator}. Their work establishes a general route to estimating probabilistic sensitivity measures from a single input--output sample through partition-refining strategies. MM--GSA addresses the given-data setting through metamodel-based constructions and, through $\Ti$, introduces predictor availability as an additional dimension of sensitivity.

Several directions remain open for future research. The step-function construction underlying $\Si$ is one of several possible nonparametric approximations to the main-effect function $f_i$. Smoother alternatives, such as kernel smoothing or local regression, could be considered in its place. We leave the study of such alternatives, and their implications for the consistency and finite-sample behavior of $\Si$, to future work.

Theorem~\ref{thm:Ti_varsel} establishes a sufficient condition for a vanishing structural index under input independence. A complete characterization of the converse direction remains open. More generally, extending the theoretical characterization of $\Ti$ beyond input independence requires accounting explicitly for dependence and redundancy among the available predictors.

The dependency scenarios considered in Section~\ref{sec:ti_dag} show that the behavior of $\hatTi$ is affected by the extent to which information carried by one predictor can substitute for information carried by others. Scenario~4 and its nonlinear variant illustrate that this phenomenon can depend on the functional form of the response relationship and on the ability of the metamodel to exploit the available predictive information. A systematic characterization of structural sensitivity under different forms of dependence and predictive redundancy therefore represents an important direction for further research.

Finally, the presence--absence criterion underlying $\hatTi$ could be extended to other probabilistic sensitivity measures covered by the common rationale of \citet{borgonovo2016common}. In the present work, this criterion is applied to predictive loss, so that $\hatTi$ quantifies the sensitivity of predictive performance to the availability of $X_i$ across alternative predictor subsets. A natural extension would retain the same presence--absence construction while considering, for example, density-based, distribution-based, or information-based sensitivity measures. This would make it possible to investigate whether the structural relevance of predictor availability changes with the particular feature of the output distribution targeted by the sensitivity measure.

\section*{Acknowledgments}
The author thanks A. S. and R. S. for their helpful discussions on the theoretical results and their guidance throughout this work. 

\section*{Funding}
This research was funded by the Italian Ministry of Research, in complementary actions to the NRRP Fit4MedRob - Fit for Medical Robotics Grant (\# PNC0000007).

\section*{Data Availability Statement}
The NHANES data used in the empirical application are publicly available through the \texttt{NHANES} R package \citep{nhanes2015}. Code implementing the proposed estimators, together with the scripts used for the Monte Carlo experiments and the empirical application, is available from the author upon request and will be made publicly available upon publication.

\section*{Conflict of Interest Statement}
The author declares no conflict of interest.

\section*{Artificial Intelligence Disclosure}
During the preparation of this work, the author used AI tools to assist with editing and refining the manuscript text and checking internal notational and grammar style. The author reviewed and takes full responsibility for the content of this publication.

\bibliographystyle{plainnat}
\bibliography{bibfile}

\appendix

\section{Additional Benchmark Functions}
\label{app:bench}

This appendix reports additional benchmark results for the observational first-order estimator $\hat{S}_{X_i}$. The Bratley function and the Oakley--O'Hagan function complement the benchmark scenarios presented in Section~\ref{sec:mc_si} by considering higher-dimensional models with interactions and mixed linear--nonlinear effects. All three metamodels recover the ordering of the inputs. Neural Networks provide the closest agreement with the reference values, while Random Forests and, to a lesser extent, XGBoost overestimate the indices of the most influential inputs, in line with the results in Section~\ref{sec:bench}.
\paragraph{Bratley Function}

Table~\ref{tab:bratley} reports the mean and standard deviation of $\hat{S}_{X_i}$ over $R=500$ Monte Carlo replications for the Bratley function \citep{bratley1992implementation}. The function is an alternating sum of products of the inputs, with rapidly decreasing first-order contributions, providing a useful complement to the Ishigami and Sobol' G benchmarks discussed in the main text.

\begin{table}[htbp]
	\centering
	\caption{Recovery of the first-order Sobol' indices for the Bratley function. Mean (standard deviation) of $\hat{S}_{X_i}$ over $R=500$ Monte Carlo replications. $S_i^*$ denotes the reference first-order Sobol' index, computed numerically.}
	\label{tab:bratley}
	\begin{tabular}{lcccc}
		\toprule
		Variable & NN & RF & XGB & $S_i^*$ \\
		\midrule
		$X_1$ & 0.684 (0.024) & 0.748 (0.025) & 0.703 (0.024) & 0.659 \\
		$X_2$ & 0.175 (0.010) & 0.164 (0.011) & 0.171 (0.010) & 0.168 \\
		$X_3$ & 0.042 (0.003) & 0.032 (0.005) & 0.036 (0.003) & 0.040 \\
		$X_4$ & 0.012 (0.001) & 0.007 (0.002) & 0.008 (0.001) & 0.011 \\
		$X_5$ & 0.002 (0.000) & 0.001 (0.001) & 0.001 (0.000) & 0.002 \\
		$X_6$ & 0.001 (0.000) & 0.001 (0.000) & 0.000 (0.000) & 0.001 \\
		$X_7$ & 0.000 (0.000) & 0.000 (0.000) & 0.000 (0.000) & 0.000 \\
		$X_8$ & 0.000 (0.000) & 0.000 (0.000) & 0.000 (0.000) & 0.000 \\
		\bottomrule
	\end{tabular}
\end{table}

\paragraph{Oakley--O'Hagan Function}

The Oakley--O'Hagan function \citep{oakley2004probabilistic} represents a higher-dimensional benchmark ($p=15$) combining linear, nonlinear and interaction effects. The first ten variables have relatively small first-order Sobol' indices, whereas the last five account for most of the output variance. Table~\ref{tab:oakley} reports the corresponding estimates of $\hat{S}_{X_i}$.

\begin{table}[htbp]
	\centering
	\caption{Recovery of the first-order Sobol' indices for the Oakley--O'Hagan function. Mean (standard deviation) of $\hat{S}_{X_i}$ over $R=500$ Monte Carlo replications. $S_i^*$ denotes the reference first-order Sobol' index, computed numerically.}
	\label{tab:oakley}
	\begin{tabular}{lcccc}
		\toprule
		Variable & NN & RF & XGB & $S_i^*$ \\
		\midrule
		$X_1$  & 0.002 (0.001) & 0.002 (0.002) & 0.002 (0.002) & 0.002 \\
		$X_2$  & 0.001 (0.001) & 0.001 (0.001) & 0.001 (0.001) & 0.000 \\
		$X_3$  & 0.002 (0.001) & 0.001 (0.001) & 0.002 (0.001) & 0.001 \\
		$X_4$  & 0.003 (0.001) & 0.002 (0.002) & 0.002 (0.001) & 0.002 \\
		$X_5$  & 0.002 (0.001) & 0.002 (0.001) & 0.001 (0.001) & 0.004 \\
		$X_6$  & 0.021 (0.003) & 0.018 (0.007) & 0.018 (0.005) & 0.023 \\
		$X_7$  & 0.025 (0.004) & 0.017 (0.007) & 0.018 (0.005) & 0.024 \\
		$X_8$  & 0.029 (0.005) & 0.023 (0.008) & 0.026 (0.007) & 0.026 \\
		$X_9$  & 0.043 (0.006) & 0.035 (0.012) & 0.035 (0.010) & 0.046 \\
		$X_{10}$ & 0.016 (0.003) & 0.010 (0.005) & 0.010 (0.004) & 0.014 \\
		\midrule
		$X_{11}$ & 0.106 (0.011) & 0.147 (0.029) & 0.132 (0.022) & 0.102 \\
		$X_{12}$ & 0.141 (0.012) & 0.221 (0.035) & 0.195 (0.026) & 0.134 \\
		$X_{13}$ & 0.106 (0.010) & 0.148 (0.030) & 0.130 (0.021) & 0.103 \\
		$X_{14}$ & 0.112 (0.011) & 0.152 (0.031) & 0.138 (0.023) & 0.105 \\
		$X_{15}$ & 0.132 (0.012) & 0.192 (0.033) & 0.169 (0.025) & 0.120 \\
		\bottomrule
	\end{tabular}
\end{table}

\section{Stress Test on Randomly Generated Metafunctions}
\label{app:becker}

To complement the benchmark experiments reported in the main text, we evaluate the robustness of the proposed first-order estimator using the stress-test protocol of \citet{becker_metafunctions_2020}, subsequently adopted by \citet{puy_comprehensive_2022}. Unlike the benchmark functions considered in Section~\ref{sec:mc_si}, this protocol generates a large ensemble of randomly constructed metafunctions spanning a wide range of nonlinearities and interaction structures.

For each Monte Carlo replication, a new metafunction is generated and reference first-order Sobol' indices are estimated using the estimator of \citet{saltelli2010a} applied to a large Monte Carlo sample of $N=2000$ observations drawn from the same metafunction. These estimates are taken as reference values against which the proposed observational estimator $\hat{S}_{X_i}$ is evaluated. The estimator is then computed from an independent sample of $n_{tr}+n_{te}=2000$ observations using a quantile grid of size $K=100$.

Performance is summarized by the mean absolute error (MAE) between the estimated and reference indices together with the Spearman rank correlation $\rho$, both averaged over $R=500$ Monte Carlo replications with $p=10$ input variables.

Table~\ref{tab:becker} reports the results. All three metamodels achieve low estimation error and high rank correlation across the randomly generated metafunctions, indicating that the proposed estimator remains accurate beyond the classical benchmark functions considered in the main text. XGB provides the lowest MAE and the highest rank correlation, followed by NN and RF. The small variability of the MAE across replications further indicates stable performance over a broad range of function structures.

\begin{table}[htbp]
\centering
\caption{Performance of $\hat{S}_{X_i}$ under the Becker stress-test protocol. Results are averaged over $R=500$ randomly generated metafunctions with $p=10$ input variables.}
\label{tab:becker}

\begin{tabular}{lccc}
\toprule
Metamodel & MAE & Spearman $\rho$ & SD(MAE)\\
\midrule
XGB & 0.012 & 0.960 & 0.005\\
NN  & 0.014 & 0.918 & 0.027\\
RF  & 0.019 & 0.937 & 0.008\\
\bottomrule
\end{tabular}
\end{table}

\end{document}